\documentclass[a4paper,11pt]{article}
\pdfoutput=1

\usepackage{jheppub}

\usepackage[T1]{fontenc}
\usepackage{amsmath,amssymb,amsthm,bm}
\usepackage{graphicx}
\usepackage{microtype}
\usepackage{booktabs}
\usepackage{makecell}
\usepackage{tikz}
\usetikzlibrary{arrows.meta}

\newcommand{\dd}{\mathrm{d}}
\newcommand{\cV}{\mathcal V}
\newcommand{\cX}{\mathcal X}
\newcommand{\cW}{\mathfrak W}
\newcommand{\Tr}{\operatorname{Tr}}
\newcommand{\Lk}{\operatorname{Lk}}
\newcommand{\cA}{\mathcal A}
\newcommand{\cU}{\mathcal U}
\newcommand{\cG}{\mathcal G}
\newcommand{\EPH}{E_{\rm P}^{\rm(h)}}
\newcommand{\cP}{\mathfrak P}
\newcommand{\fmark}{\mathfrak f}
\newcommand{\arcsinh}{\operatorname{arcsinh}}
\newcommand{\arccosh}{\operatorname{arccosh}}
\newcommand{\cC}{\mathcal C}
\newcommand{\Alg}{\mathfrak A}
\newcommand{\cQ}{\mathcal Q}
\newcommand{\cZ}{\mathcal Z}

\theoremstyle{plain}
\newtheorem{theorem}{Theorem}[section]
\newtheorem{corollary}[theorem]{Corollary}
\newtheorem{proposition}[theorem]{Proposition}
\theoremstyle{definition}
\newtheorem{definition}[theorem]{Definition}
\theoremstyle{remark}
\newtheorem{remark}[theorem]{Remark}

\title{\boldmath Holonomy Separation and Anyonic Hair in AdS$_3$}

\author{Kumar Ghosh}
\emailAdd{jb.ghosh@outlook.com}
\affiliation{E.ON Digital Technology, Laatzener Str.\ 1, 30539 Hannover, Germany}

\keywords{AdS-CFT Correspondence, Chern-Simons Theory, Black Holes,
Solitons and Vortices}

\abstract{
Three-dimensional Einstein gravity and Chern-Simons-Higgs matter both admit
Chern-Simons descriptions, but their holonomies encode different physics. We
prove that for any Einstein-Chern-Simons-Higgs solution with
charge-conjugation-invariant boundary data, charge conjugation reverses the
vortex winding $n\to-n$ while leaving the metric and first-order gravitational
connections pointwise unchanged. Every closed gravitational holonomy
character, every matched Wilson amplitude, the BTZ charges, the complete
classical Ba\~nados data and the leading geometric mixed-state hierarchy are
therefore exactly even in $n$. The result extends beyond classical geometry.
On a fixed marked Euclidean replica filling with a charge-conjugation-covariant
measure, gauge fixing and regulator, a change of variables gives
$Z_{q,\fmark}[n;\mathcal J_+]=Z_{q,\fmark}[-n;\mathcal J_+]$ for every replica
number and every charge-conjugation-even source, so the uncharged R\'enyi
hierarchy, the Faulkner--Lewkowycz--Maldacena one-loop correction, the
renormalized generalized-entropy functional and the set of quantum extremal
surfaces coincide for the two orientations. The same change of variables fixes
the complementary sector exactly. With an Aharonov-Bohm angle $\mu$ conjugate
to the entangling-region charge, $Z_{q,\fmark}[n;\mu]=Z_{q,\fmark}[-n;-\mu]$,
so the symmetry-resolved entanglement spectrum obeys
$\cZ_q(\cQ;n)=\cZ_q(-\cQ;-n)$ and every odd charged cumulant reverses with the
winding. The orientation is thus stored entirely in the charge asymmetry of
the entanglement spectrum, and is read at topological order by a linked
charged matter line with contrast
$\cX_p^{(\fmark)}(n)=\exp[-2i\nu_{\fmark}\kappa\Phi_p\Phi_n]$, which aliases
the two orientations exactly when the vortex sector is self-conjugate. The
pair is a bulk logical qubit on which the entire neutral
gravitational-replica algebra acts as a multiple of the identity. Exact BTZ
cross sections show that the purification plateau and the entanglement-wedge
disconnection transition are orientation blind, the latter closing the
cross-section-supported readout with the cutoff-independent jump
$(c/3)\log(1+\sqrt2)$ and Markov gap $(c/3)\log(3+2\sqrt2)$. Numerical
Bogomolny solutions verify odd flux and even energy and spin. Using the same
profiles only as linearized charge-matching data, we also obtain the
dimensionless asymptotic shifts $\delta\widehat M=16\pi Gv^2|n|$ and
$|\delta\widehat j|=16\pi Gv^2n^2/(mL)$. Conditional on the existence of a
stationary localized Einstein-CSH branch and on the rotating-BTZ
cross-section continuation stated explicitly below, these shifts define a
finite continuous winding envelope. This perturbative backreaction estimate
is kept separate from the exact charge-conjugation theorem.
}

\begin{document}
\maketitle
\flushbottom

\section{Introduction}
\label{sec:intro}

A central limitation of semiclassical gravity is that physically distinct
matter configurations can generate identical gravitational data. In $2+1$
dimensions this limitation can be sharpened into an exact statement. The Weyl
tensor vanishes identically, so local curvature is fixed algebraically by the
stress tensor and there are no propagating Einstein-gravity modes. Nontrivial
classical gravitational information resides instead in global holonomies
around noncontractible curves and in boundary gravitons generated by
asymptotic symmetries~\cite{Carlip2005Review,Carlip2023Review}. Vacuum
AdS$_3$ gravity can be written as an
$\mathrm{SL}(2,\mathbb R)_+\times\mathrm{SL}(2,\mathbb R)_-$ Chern-Simons
theory~\cite{AchucarroTownsend1986,Witten1988Gravity,Carlip1998Book},
Brown-Henneaux boundary conditions produce two Virasoro
algebras~\cite{BrownHenneaux1986}, and stationary BTZ geometries are
characterized by the conjugacy classes of their two angular-cycle
holonomies~\cite{BTZ1992,BHTZ1993,Carlip1995BTZ}. Because the classical data
separate cleanly into local curvature, global holonomy and asymptotic
boundary modes, one can hope to exhaust them, and so to decide what geometry
cannot see.

Self-dual Chern-Simons-Higgs (CSH) vortices supply a second and physically
distinct Chern-Simons structure. Their Abelian Gauss law binds electric
charge to magnetic flux, and their topological sectors carry fractional spin
and nontrivial mutual
monodromy~\cite{HongKimPac1990,JackiwWeinberg1990,JackiwLeeWeinberg1990}. The
two Chern-Simons formulations must not be conflated. The gravitational
connections encode geometry, parallel transport and asymptotic charges, while
the matter connection encodes charge-flux attachment and anyonic transport.
Charge conjugation acts trivially on the first and reverses the second, and
it is this separation that the present article makes precise.

For any Lorentzian Einstein-CSH solution obeying charge-conjugation-invariant
boundary conditions, the map
\begin{equation}
(g_{\mu\nu},\phi,A_\mu)\longmapsto(g_{\mu\nu},\phi^*,-A_\mu)
\label{eq:introCmap}
\end{equation}
constructs a winding $-n$ solution with the same metric from a winding $n$
solution. No weak-backreaction expansion, radial ansatz, stationarity
assumption or uniqueness theorem is required. In first-order variables the
triad and spin connection may be chosen identically, so all closed
gravitational holonomy characters coincide, and Wilson lines or networks
coincide whenever their contours, representations, endpoint states, junction
intertwiners and boundary frames are matched. When the solution is stationary
and asymptotically BTZ, its holonomy classes, mass, angular momentum, horizon
data and Brown-Henneaux zero modes coincide, and more generally, when the
matter falloff preserves Brown-Henneaux boundary conditions, the complete
classical Ba\~nados functions and every Virasoro mode coincide. The vortex
flux nevertheless reverses, and the matter mutual monodromy is inverted. The
pair is therefore indistinguishable in every classical Einstein-gravity
channel at fixed dressing, and distinguishable by a topological matter line.

The same symmetry controls the quantum replica sector, and it does so in two
complementary ways. For each replica number and each fixed marked filling,
charge conjugation is a bijection between the winding-$n$ and winding-$-n$
integration cycles. When the measure, gauge fixing and regulator are
charge-conjugation covariant, the corresponding replica partition functions
agree for every charge-conjugation-even source. The equality is therefore not
merely an equality of saddle metrics. It pairs the fluctuation operators,
their ghost sectors and their renormalized determinants, and it makes the
generalized entropy equal as a functional of every candidate surface, so that
neither the Faulkner--Lewkowycz--Maldacena correction nor quantum
extremization can split the two
orientations~\cite{FaulknerLewkowyczMaldacena2013,EngelhardtWall2015}. The
same change of variables determines the charged sector exactly rather than
leaving it open. Inserting an Aharonov-Bohm angle $\mu$ conjugate to the
charge of the entangling region gives
$Z_{q,\fmark}[n;\mu]=Z_{q,\fmark}[-n;-\mu]$, whose even part in $\mu$ is even
in $n$ and whose odd part is odd in $n$. Summing over the charge sectors
returns the blind total entropy, so the orientation is carried by the charge
asymmetry of the entanglement spectrum and by nothing else. This locates the
distinguishing observable before any particular probe is constructed.

What makes this worth proving is not that it yields one further invariant. It
is that an entire observable algebra goes blind to a single bit of bulk
information, and that quantum gravity forbids such a bit from being
unreadable in principle. Harlow and Ooguri show that a bulk symmetry acting
faithfully on matter while acting trivially on all gravitationally dressed
data cannot be an exact global symmetry of a holographic
theory~\cite{HarlowOoguri2019,HarlowOoguri2021}. The theorem above constructs
precisely that configuration, explicitly and nonperturbatively in Newton's
constant, so their analysis applies and tells us what must complete it. The
matter sector has to supply a charged operator that reads the bit. It does,
in the form of a topological line, and that line is accordingly not an
optional probe added for convenience but an object the consistency of the
theory requires.

Holographic purification is the natural setting in which to watch both halves
at once. The entanglement of purification minimizes the entropy over all
purifications of a mixed state, and Nguyen, Devakul, Halbasch, Zaletel and
Swingle, simultaneously with Takayanagi and Umemoto, proposed that its
leading holographic value is the minimal cross section of the entanglement
wedge, $\EPH(A{:}B)=E_W(A{:}B)/(4G)$~\cite{NguyenEtAl2018,TakayanagiUmemoto2018}.
The canonical purification underlying reflected entropy gives the
complementary relation $S_R=2E_W/(4G)=2\EPH$ at leading
order~\cite{DuttaFaulkner2021}, so a single bulk cross section controls two
distinct mixed-state correlation measures. That same cross section also
decides whether a closed matter contour can be supported at all, and this is
what turns geometry into a gate.

For spin-$2$ gravity, Ref.~\cite{AmmonCastroIqbal2013} represents the RT
geodesic of a boundary interval by an open gravitational Wilson line in the
appropriate infinite-dimensional highest-weight representation, with endpoint
conditions $U_i=U_f=\mathbf 1$ and the replica-limit Casimir assignment
$\sqrt{2c_2}\to c/6$. Since the gravitational connections are identical for
$n$ and $-n$, these identically dressed RT amplitudes coincide, and equality
of the EWCS, of $\EPH$ and of $S_R$ follows directly from equality of the two
metrics. We then add a matter-sector vortex line to the Euclidean
purification replica and normalize away all one-line factors. The replica
saddle must be specified as a marked filling $\fmark$ of the boundary torus,
including its contractible cycle, framing, gluing prescription and common
edge-sector data~\cite{Witten1989,ElitzurEtAl1989,Witten1992Sewing}. If
$\nu_{\fmark}$ is the signed linking number in that fixed filling, the
surviving phase is the Magnus-corrected mutual monodromy, and the
vortex-antivortex contrast approaches
\begin{equation}
\cX_p^{(\fmark)}(n)=
\exp[-2i\nu_{\fmark}\kappa\Phi_p\Phi_n].
\label{eq:introcontrast}
\end{equation}

The analytic BTZ formulas of Ref.~\cite{NguyenEtAl2018} make the gate
quantitative. The geometric purification cost can develop plateaus and can
jump to zero when the entanglement wedge disconnects, yet it is identical for
the two vortex orientations. Two exact properties of those cross sections
follow in a few lines of algebra and are recorded in
Sec.~\ref{sec:purificationgate}. A single inequality
$r_+/L\geq(2/\pi)\arcsinh 1$ governs both the existence of the one-sided
plateau and the existence of the two-sided transition, and the jump of
$\EPH$ at that transition is the cutoff-independent value
$(L/2G)\arcsinh 1=(c/3)\log(1+\sqrt2)$. The latter number is fixed by the
central charge alone, and it carries no information whatever about the sector
whose readout it gates. The cross-section-supported matter phase, by
contrast, stays quantized throughout a connected linked saddle and becomes
unavailable only when that channel closes. The same conclusion holds for the
first quantum correction: any bulk mutual information that survives the
classical disconnection at order $G^0$ is charge-conjugation even and hence
again common to the two orientations. Holographic purification therefore acts
as a gate rather than as an orientation detector, at leading order and at one
loop alike.

The results themselves are quickly stated. We prove the exact
charge-conjugation pairing for any admissible Einstein-CSH solution with
invariant boundary sources, and show that it leaves the complete classical
gravitational fields unchanged. Specializing to Brown-Henneaux asymptotics
gives equality of local curvature, of all matched gravitational Wilson data,
of the full classical Ba\~nados functions and every Virasoro mode, and, on
the stationary BTZ branch, of the mass, angular momentum, horizon radii and
semiclassical Cardy entropy. The leading geometric purification hierarchy,
comprising $\EPH$, the reflected entropy, the mutual information and the
Markov gap, is blind in the same way. At fixed marked filling this blindness
extends to every charge-conjugation-even replica partition function and, in
particular, to the FLM one-loop determinant, the renormalized generalized
entropy and the quantum-extremal-surface set. The charge-conjugate pair then
spans a two-dimensional code subspace on which the neutral
classical-and-replica algebra acts as a multiple of the identity, so the
relative orientation is a bulk logical qubit in the sense of holographic
quantum error
correction~\cite{AlmheiriDongHarlow2015,DongHarlowWall2016,JahnEisert2021}.
We then determine the complementary sector in closed form. The charged
replica identity $Z_{q,\fmark}[n;\mu]=Z_{q,\fmark}[-n;-\mu]$ implies
$\cZ_q(\cQ;n)=\cZ_q(-\cQ;-n)$ for the symmetry-resolved partition functions
and entropies, and makes every odd charged cumulant exactly odd in $n$, with
leading topological value $\kappa\Phi_n$ for a wedge containing the vortex
core. We construct the operator that realizes this odd datum at topological
order as a Euclidean line insertion on a fixed marked replica filling, derive
its universal contrast together with its framing, edge-sector and modular
qualifications, and show that it fails to resolve the two orientations if and
only if the vortex sector is self-conjugate, which is to say if and only if
there is no doublet to resolve. We then ask what a weakly backreacted exterior can retain if a regular
stationary vortex-dressed branch exists. Using the flat-space BPS core only
as linearized charge-matching data, the dimensionless BTZ shifts are
$\delta\widehat M=\gamma|n|$ and
$|\delta\widehat j|=\gamma n^2/(mL)$ for a nonrotating seed, with
$\gamma=16\pi Gv^2$. Combining these even shifts with the rotating-BTZ
matching diagnostic gives a finite continuous candidate envelope for $|n|$.
This envelope is not itself a spectrum: physical candidates must additionally
have integer winding, a small matter tail outside the horizon, and a regular
stationary Einstein-CSH completion. Flat-space Bogomolny boundary-value
solutions verify the odd parity of magnetic flux and the even parity of
energy and spin for $|n|=1,2,3$, with all three sum-rule errors below
$1.1\times10^{-11}$.

Five qualifications hold throughout and are stated here once rather than
repeated at each use. Throughout, $\EPH$ denotes the holographic
cross-section prescription, and we do not assert that the unrestricted
field-theoretic minimization over all purifications has been proved to equal
it. No standalone gauge-invariant Wilson-network representation of the EWCS
is assumed, since equality of the cross sections for the two orientations
follows from equality of the metrics. Every line-amplitude formula refers to
one fixed marked filling, with identical framing and edge-sector data in
numerator and denominator, and no modular sum over fillings is evaluated. The
quantum replica statements are termwise fixed-filling statements, and assume a
charge-conjugation-covariant operator-algebra or edge-mode prescription, gauge
fixing and regulator, together with a common continuation in the replica
number. The classical charge-conjugation and holonomy-separation theorems of
Sec.~\ref{sec:cthm} are unconditional, whereas the BTZ specialization
presupposes a regular positive-mass nonextremal Einstein-CSH branch that is
not constructed here, as Appendix~\ref{sec:scope} states in full. The
backreaction analysis of Secs.~\ref{sec:exterior} and~\ref{sec:band} is a
linearized charge-matching estimate: it additionally assumes that the BPS core
controls the integrated $O(G)$ charge shifts, that the matter tail outside the
horizon is small, and, for the winding envelope, that the rotating-BTZ
cross-section matching used in Appendix~\ref{sec:rotatingplateau} is the
relevant covariant continuation. None of these additional assumptions enters
the exact orientation-blindness theorem. Finally, the error-correction reading
of Sec.~\ref{sec:qec} interprets geometric facts established independently in
Sec.~\ref{sec:purificationgate} rather than replacing them.

Section~\ref{sec:setup} defines the CSH model and the flat-space Bogomolny
core data. Section~\ref{sec:cthm} states the hypotheses, proves the
charge-conjugation and holonomy-separation theorems, derives their local,
global and asymptotic consequences, and proves the fixed-filling replica
identity together with its FLM, quantum-extremal-surface and
symmetry-resolution corollaries. Section~\ref{sec:purificationgate} uses analytic one-sided and
two-sided BTZ cross sections to establish the purification gate, then develops
a conditional linearized charge-matching estimate and its continuous winding
envelope.
Section~\ref{sec:interferometer} constructs the channel-selective line
insertion and derives the vortex-antivortex contrast.
Section~\ref{sec:qec} recasts these results as statements about a bulk
logical qubit, fixes exactly when the doublet exists, and shows why the
matter line is required. Section~\ref{sec:discussion} discusses the resulting
gravitationally degenerate topological doublet.
Appendix~\ref{sec:gravcsapp} records first-order and BTZ conventions,
Appendix~\ref{sec:btzpurapp} collects the analytic purification benchmarks,
derives the universal nonrotating jump height and records the conditional
rotating-BTZ matching diagnostic, and Appendix~\ref{sec:scope} states precisely the scope of the
assumed asymptotically BTZ branch.

\section{Chern-Simons-Higgs vortex sector and flat-space Bogomolny data}
\label{sec:setup}

\subsection{Action and conventions}
\label{sec:action}

Consider Einstein gravity with negative cosmological constant coupled to a
relativistic CSH model,
\begin{align}
I&=I_{\rm EH}+I_{\rm CSH},\nonumber\\
I_{\rm EH}&=\frac{1}{16\pi G}\int \dd^3x\sqrt{-g}
\left(R+\frac{2}{L^2}\right),\nonumber\\
I_{\rm CSH}&=\int \dd^3x\sqrt{-g}
\left[(D_\mu\phi)^*D^\mu\phi-V(|\phi|)\right]
+\frac{\kappa}{4}\int \dd^3x\,
\epsilon^{\mu\nu\rho}A_\mu F_{\nu\rho},
\label{eq:action}
\end{align}
where $D_\mu=\nabla_\mu-ieA_\mu$ and
\begin{equation}
V(|\phi|)=\frac{e^4}{\kappa^2}|\phi|^2\left(|\phi|^2-v^2\right)^2.
\label{eq:potential}
\end{equation}

Equation~\eqref{eq:potential} is the self-dual potential of the flat-space
relativistic CSH theory. Coupling this fixed potential to Einstein gravity
preserves the exact charge-conjugation symmetry used below, but does not by
itself establish a fully backreacted Bogomolny system. In related gravitating
CSH models a first-order reduction requires a gravitationally modified
eighth-order potential~\cite{London1995,Clement1996}. The numerical profiles
of Sec.~\ref{sec:profiles} are accordingly controlled flat-space BPS core
data, and are not claimed to solve the coupled Einstein-CSH boundary-value
problem. The charge-conjugation theorem does not require self-duality and
holds for any potential depending only on $|\phi|$, including such
gravitational completions.

We work with signature $(+,-,-)$ for the flat-space vortex calculation and
$\epsilon^{012}=+1$ for the antisymmetric density. Reversing the orientation
convention for the Chern-Simons term reverses all displayed flux and spin
phases simultaneously, and leaves the observable distinction between $n$ and
$-n$ unchanged. The Chern-Simons term is metric independent, so it does not
contribute to the matter stress tensor.

\subsection{Bogomolny structure and self-dual equations}
\label{sec:bps}

Variation of the action with respect to $A_0$ gives the Chern-Simons Gauss
constraint, which with the orientation of Sec.~\ref{sec:action} integrates to
\begin{equation}
Q=\kappa\Phi,\qquad
\Phi=\int \dd^2x\,B,
\label{eq:gauss}
\end{equation}
so any excitation carrying magnetic flux automatically carries proportional
electric charge. At the sixth-order potential in Eq.~\eqref{eq:potential} the
static energy rearranges into nonnegative squares plus a total flux
term~\cite{HongKimPac1990,JackiwWeinberg1990,JackiwLeeWeinberg1990}. For
positive flux the first-order Bogomolny equations are
\begin{align}
(D_1+iD_2)\phi&=0,\nonumber\\
B&=\frac{2e^3}{\kappa^2}|\phi|^2\left(v^2-|\phi|^2\right),\nonumber\\
A_0&=-\frac{e}{\kappa}\left(v^2-|\phi|^2\right),
\label{eq:BPS}
\end{align}
with simultaneous sign reversal for negative flux. Solutions of
Eq.~\eqref{eq:BPS} saturate the Bogomolny bound
\begin{equation}
E\ge ev^2|\Phi|.
\label{eq:bound}
\end{equation}
For a topological vortex, finite energy requires $|\phi|\to v$ and
$D_i\phi\to0$ at spatial infinity, so the flux is quantized,
\begin{equation}
\Phi_n=\frac{2\pi n}{e},\qquad
E_n=2\pi v^2|n|.
\label{eq:quantized}
\end{equation}
The same solutions carry intrinsic angular momentum
\begin{equation}
s_n=J_n=-\frac{\kappa\Phi_n^2}{4\pi},
\label{eq:spin}
\end{equation}
which is even under $n\to-n$.

A normalization that we use repeatedly below follows directly from the third
line of Eq.~\eqref{eq:BPS}. On the self-dual branch the electric contribution
to the energy density coincides with the potential,
\begin{equation}
|D_0\phi|^2=e^2A_0^2|\phi|^2
=\frac{e^4}{\kappa^2}|\phi|^2\left(v^2-|\phi|^2\right)^2
=V(|\phi|),
\label{eq:electricequalspotential}
\end{equation}
so that the two terms combine into $2V$ in the static energy density. This
fixes the coefficient of the potential term in Eq.~\eqref{eq:profiles}.

\subsection{Vortex charges and topological data}
\label{sec:charges}

Collecting the quantized flux, the Chern-Simons attached charge, the BPS
energy and the intrinsic spin gives the topological data of the sector,
\begin{align}
\Phi_n&=\frac{2\pi n}{e}, & Q_n&=\kappa\Phi_n,\nonumber\\
E_n&=ev^2|\Phi_n|, & s_n&=-\frac{\kappa\Phi_n^2}{4\pi}.
\label{eq:charges}
\end{align}
Here $s_n$ is the intrinsic vortex spin in units with $\hbar=1$, and
Eqs.~\eqref{eq:charges} hold up to the common orientation convention for
$\kappa$ and $\Phi_n$. The flux and charge are odd under $n\to-n$, while the
energy and spin are even. This parity asymmetry is the single input required
for both the geometric-blindness theorem of Sec.~\ref{sec:cthm} and the
interferometric readout of Sec.~\ref{sec:interferometer}.

\subsection{Radial profiles and numerical validation}
\label{sec:profiles}

For a rotationally symmetric positive-winding solution take
\begin{equation}
\phi=vg(r)e^{in\varphi},
\qquad
A_\varphi=\frac{n-a(r)}{e},
\label{eq:ansatz}
\end{equation}
with boundary conditions
\begin{equation}
g(0)=0,\quad a(0)=n,
\qquad
g(\infty)=1,\quad a(\infty)=0.
\label{eq:bc}
\end{equation}
Defining $m=2e^2v^2/|\kappa|$ and $x=mr$, the BPS system in
Eq.~\eqref{eq:BPS} reduces to
\begin{equation}
\frac{\dd g}{\dd x}=\frac{ag}{x},\qquad
\frac{\dd a}{\dd x}=-\frac{x}{2}g^2\left(1-g^2\right).
\label{eq:radial}
\end{equation}
The local field densities on this ansatz are
\begin{align}
\frac{eB_n}{m^2}&=\operatorname{sgn}(n)\,\frac{g^2(1-g^2)}{2},\nonumber\\
\frac{T_{00}}{m^2v^2}&=2\left(\frac{ag}{x}\right)^2
+\frac{g^2(1-g^2)^2}{2},\nonumber\\
\frac{T_{0\varphi}}{mv^2}&=-\operatorname{sgn}(\kappa)\,a\,g^2\left(1-g^2\right),
\label{eq:profiles}
\end{align}
where the gradient term uses the first equation of Eq.~\eqref{eq:radial} and
the potential term carries the factor $2$ supplied by
Eq.~\eqref{eq:electricequalspotential}. The magnetic field density vanishes
at the vortex centre because $g(0)=0$, so $B_n$ forms a ring rather than
peaking at the scalar zero.

\begin{figure}[t]
\centering
\includegraphics[width=0.99\textwidth]{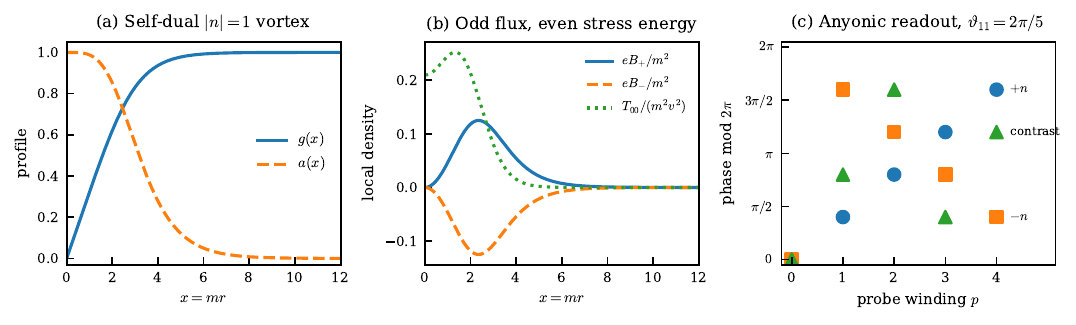}
\caption{Self-dual CSH vortex and its entanglement-interferometric readout.
(a) Numerical Bogomolny profiles for $|n|=1$, with $x=mr$ and
$m=2e^2v^2/|\kappa|$. (b) Charge conjugation reverses the magnetic field,
$B_{+}=-B_{-}$, but leaves the local energy density identical. The magnetic
field is ring shaped because it vanishes where the Higgs field vanishes.
(c) Representative topological-limit phases for the unit-vorticity
mutual-monodromy angle $\vartheta_{11}\equiv-\kappa(2\pi/e)^2=2\pi/5$ and
unit marked-filling linking $\nu_{\fmark}=1$. The contrast phase is twice the
single-orientation phase modulo $2\pi$.}
\label{fig:main}
\end{figure}

We solved Eq.~\eqref{eq:radial} as a boundary-value problem on
$x\in[10^{-5},35]$ using an adaptive collocation solver. Near the origin we
imposed the regular series
\begin{equation}
g(x)=C_nx^n+O(x^{n+2}),\quad
a(x)=n-\frac{C_n^2}{4(n+1)}x^{2n+2}+\cdots,
\label{eq:series}
\end{equation}
and determined $C_n$ by the condition $g(35)=1$. The integrals were
evaluated on a uniform grid. The three dimensionless sum rules are
\begin{align}
2\pi\int_0^\infty x\,\frac{eB}{m^2}\,\dd x&=2\pi n,\nonumber\\
2\pi\int_0^\infty x\,\frac{T_{00}}{m^2v^2}\,\dd x&=2\pi |n|,\nonumber\\
\int_0^\infty x\,\frac{|T_{0\varphi}|}{mv^2}\,\dd x&=n^2,
\label{eq:sumrules}
\end{align}
and they carry the flux, energy and magnitude of the spin kernel of a single
topological vortex. The first is odd in $n$, whereas the latter two are even.
The physical spin is recovered as
$J_n=-2\pi\operatorname{sgn}(\kappa)(v^2/m)n^2
=-\kappa\Phi_n^2/(4\pi)$, in agreement with Eq.~\eqref{eq:spin}.

The sum rules play different diagnostic roles, and we state this explicitly to
avoid overinterpretation. The flux and spin integrands are both exact
derivatives on any solution of Eq.~\eqref{eq:radial}. For the flux,
\begin{equation}
2\pi\int_0^\infty x\,\frac{g^2(1-g^2)}{2}\,\dd x
=-2\pi\int_0^\infty \frac{\dd a}{\dd x}\,\dd x
=2\pi\left[a(0)-a(\infty)\right]=2\pi n,
\label{eq:fluxfirstintegral}
\end{equation}
while the second line of Eq.~\eqref{eq:radial} gives
$x\,g^2(1-g^2)=-2\,\dd a/\dd x$ and hence
\begin{equation}
\int_0^\infty x\,a\,g^2(1-g^2)\,\dd x
=-\int_0^\infty\frac{\dd\left(a^2\right)}{\dd x}\,\dd x
=a(0)^2-a(\infty)^2=n^2,
\label{eq:spinfirstintegral}
\end{equation}
which reproduces the magnitude of $s_n\propto\Phi_n^2$ in
Eq.~\eqref{eq:charges} directly from the profile once the prefactor
$-2\pi\operatorname{sgn}(\kappa)v^2/m$ is restored. These two rules therefore test the accuracy with which the solver
realizes the boundary conditions in Eq.~\eqref{eq:bc} and do not independently
test the BPS structure. The energy sum rule is the nontrivial check, since
saturation of Eq.~\eqref{eq:bound} is a property of the self-dual branch and
not of a generic solution of the boundary-value problem.

Results are collected in Table~\ref{tab:validation} and the resulting
profiles are displayed in Fig.~\ref{fig:main}(a). For $n=1,2,3$ all three
integrals reproduce their exact values with relative errors below
$1.1\times10^{-11}$. The flux is odd under $n\to-n$, while every local source
entering the gravitational response is even. The ring-shaped field in
Fig.~\ref{fig:main}(b) is a distinctive CSH
feature~\cite{JackiwLeeWeinberg1990}.

One further property of the self-dual branch is used repeatedly below.
Linearizing Eq.~\eqref{eq:radial} about $g=1$ with $g=1-\eta$ gives
$\eta''+\eta'/x-\eta=0$, so $\eta\propto K_0(x)$ and the source falls off as
\begin{equation}
\frac{T_{00}}{m^2v^2}\ \propto\ \frac{e^{-2x}}{x},
\qquad x\to\infty .
\label{eq:exponentialfalloff}
\end{equation}
Fitting $\log[x\,T_{00}]$ on $12\leq x\leq26$ returns slopes
$-2.00075$, $-2.00073$ and $-2.00049$ for $n=1,2,3$, confirming
Eq.~\eqref{eq:exponentialfalloff}. The half-mass radii are
$mR_n=2.055$, $4.101$ and $6.114$, so the core grows linearly,
$R_n\simeq 2n/m$. Exponential localization is specific to the sixth-order
self-dual potential of Eq.~\eqref{eq:potential}, and
Sec.~\ref{sec:exterior} turns it into a statement about the exterior
geometry. The supplied scripts reproduce every table and figure in this
article from a clean working directory.

\begin{table}[t]
\centering
\caption{Numerical validation of the BPS solutions. The last three columns
show the relative errors in the flux, energy and spin sum rules of
Eq.~\eqref{eq:sumrules}.}
\label{tab:validation}
\begin{tabular}{cccccc}
\toprule
$n$ & $C_n$ & \makecell{maximum\\ODE residual} & flux error & energy error & spin error\\
\midrule
1 & $3.238614446\times10^{-1}$ & $1.52\times10^{-8}$ & $4.4\times10^{-13}$ & $1.0\times10^{-11}$ & $2.4\times10^{-13}$\\
2 & $3.89720102\times10^{-2}$ & $2.25\times10^{-8}$ & $3.0\times10^{-13}$ & $3.8\times10^{-14}$ & $1.3\times10^{-14}$\\
3 & $2.85136584\times10^{-3}$ & $2.98\times10^{-8}$ & $2.5\times10^{-12}$ & $2.3\times10^{-14}$ & $3.6\times10^{-14}$\\
\bottomrule
\end{tabular}
\end{table}

\section{Holonomy-separation theorem in Chern-Simons gravity}
\label{sec:cthm}

\subsection{Two Chern-Simons structures}
\label{sec:twocs}

Introduce a triad one-form $e^a=e^a{}_\mu\dd x^\mu$ and the dualized spin
connection
\begin{equation}
\omega^a=\frac{1}{2}\epsilon^{abc}\omega_{\mu bc}\dd x^\mu,
\qquad
g_{\mu\nu}=\eta_{ab}e^a{}_\mu e^b{}_\nu .
\label{eq:firstorder}
\end{equation}
For an invertible triad and AdS radius $L$, define the two gravitational
connections
\begin{equation}
\cA^{(\pm)a}=\omega^a\pm\frac{1}{L}e^a .
\label{eq:gravconnections}
\end{equation}
Up to the standard boundary terms and with the standard invariant trace
normalization, the Einstein-Hilbert action is
\begin{equation}
I_{\rm EH}=I_{\rm CS}[\cA^{(+)}]-I_{\rm CS}[\cA^{(-)}],
\qquad
k_{\rm grav}=\frac{L}{4G},
\label{eq:EHCS}
\end{equation}
where
\begin{equation}
I_{\rm CS}[\cA]=\frac{k_{\rm grav}}{4\pi}
\int\Tr\left(\cA\wedge\dd\cA+\frac{2}{3}\cA\wedge\cA\wedge\cA\right).
\label{eq:gravCS}
\end{equation}
In vacuum the field equations are the flatness conditions
$F[\cA^{(\pm)}]=0$, so the gauge-invariant bulk data are encoded by
holonomies of the two
connections~\cite{AchucarroTownsend1986,Witten1988Gravity,Carlip2005Review}.
In the presence of the vortex core the gravitational connections are sourced
and need not be flat, but the first-order formulation remains valid. The
Abelian gauge field $A_\mu$ in Eq.~\eqref{eq:action} is a distinct
Chern-Simons connection. It carries the vortex charge-flux data and is not to
be identified with either gravitational connection in
Eq.~\eqref{eq:gravconnections}.

\subsection{Hypotheses}
\label{sec:hypotheses}

We isolate the assumptions so that the scope of the theorem is unambiguous.

\begin{definition}[Admissible branch]
\label{def:admissible}
A configuration $(g_{\mu\nu},\phi,A_\mu;\mathcal B)_n$ is an
\emph{admissible branch} of winding $n$ if
\begin{itemize}
\item[\rm(H1)] it solves the Lorentzian field equations of
Eq.~\eqref{eq:action} on a fixed manifold, with $V$ a function of $|\phi|$
only;
\item[\rm(H2)] the triad is invertible, so that the first-order variables in
Eq.~\eqref{eq:firstorder} are well defined;
\item[\rm(H3)] the complete gravitational and matter boundary data
$\mathcal B$ are invariant under the charge-conjugation map ${\sf C}$ of
Eq.~\eqref{eq:C}.
\end{itemize}
\end{definition}

\begin{definition}[Matched dressing data]
\label{def:dressing}
Two Wilson observables built on charge-conjugate configurations are said to
carry \emph{matched dressing data} if their contour, group representation,
initial and final endpoint states, boundary Lorentz frame and, for a network,
junction intertwiners, are chosen identically.
\end{definition}

Hypothesis (H3) is the operative restriction. For Dirichlet gauge data it
requires a vanishing source or another charge-conjugation-invariant choice.
If a nonzero source is held fixed without being transformed, the two
configurations belong to different boundary problems and the comparison made
below does not apply. Definition~\ref{def:dressing} is needed because an open
Wilson line is a transition amplitude between endpoint states and not a
gauge-invariant trace by itself, so equality of such amplitudes is a
statement about equally dressed objects.

\begin{remark}
\label{rem:notstationary}
Neither definition invokes stationarity, a radial ansatz, weak backreaction,
self-duality or uniqueness of the gravitational boundary-value problem.
Stationarity enters only in Corollary~\ref{cor:btz}, where the common
asymptotic data are interpreted as BTZ parameters.
\end{remark}

\subsection{Charge conjugation and the gravitational Wilson algebra}
\label{sec:cmap}

The action~\eqref{eq:action} is invariant under the exact charge-conjugation
map
\begin{equation}
{\sf C}:\qquad \phi\mapsto\phi^*,\qquad A_\mu\mapsto-A_\mu,
\qquad n\mapsto-n.
\label{eq:C}
\end{equation}
Under this transformation the gauge-covariant derivative and field strength
satisfy
\begin{equation}
D_\mu\phi\mapsto(D_\mu\phi)^*,\qquad F_{\mu\nu}\mapsto-F_{\mu\nu}.
\label{eq:Cfields}
\end{equation}
The scalar kinetic term and the potential are invariant, and the matter
Chern-Simons density obeys
\begin{equation}
(-A)\wedge\dd(-A)=A\wedge\dd A.
\label{eq:CSC}
\end{equation}
Because the Chern-Simons density is metric independent, the stress tensor is
entirely the scalar stress tensor,
\begin{align}
T_{\mu\nu}={}&(D_\mu\phi)^*D_\nu\phi+(D_\nu\phi)^*D_\mu\phi\nonumber\\
&-g_{\mu\nu}\left[(D_\rho\phi)^*D^\rho\phi-V(|\phi|)\right],
\label{eq:Tformula}
\end{align}
and Eq.~\eqref{eq:Cfields} gives
\begin{equation}
T_{\mu\nu}[\phi,A]=T_{\mu\nu}[\phi^*,-A].
\label{eq:Tparity}
\end{equation}

\begin{theorem}[Charge-conjugation pairing]
\label{thm:pairing}
Let $(g_{\mu\nu},\phi,A_\mu;\mathcal B)_n$ be an admissible branch in the
sense of Definition~\ref{def:admissible}. Then
\begin{equation}
(g_{\mu\nu},\phi,A_\mu;\mathcal B)_n
\xrightarrow{\ {\sf C}\ }
(g_{\mu\nu},\phi^*,-A_\mu;\mathcal B)_{-n}
\label{eq:Cpairing}
\end{equation}
is a second admissible branch, of winding $-n$, with the same metric. In
particular
\begin{equation}
R_{\mu\nu\rho\sigma}[n]=R_{\mu\nu\rho\sigma}[-n].
\label{eq:curvatureblind}
\end{equation}
\end{theorem}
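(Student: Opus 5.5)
The plan is to verify the three conditions (H1)--(H3) of Definition~\ref{def:admissible} directly for the transformed configuration $(g_{\mu\nu},\phi^*,-A_\mu;\mathcal B)$, and to compute its winding separately.

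\textbf{Field equations (H1).} The Einstein equations are
\begin{equation*}
G_{\mu\nu}-L^{-2}g_{\mu\nu}=8\pi G\,T_{\mu\nu}.
\end{equation*}
The left side depends on $g$ alone, which is unchanged. The right side is unchanged by Eq.~\eqref{eq:Tparity}, so the unchanged metric still solves the Einstein equations with the new matter fields.

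For the matter equations, a solution is a stationary point of $I$ at fixed $g$. By Eqs.~\eqref{eq:Cfields} and~\eqref{eq:CSC}, and because $V$ depends only on $|\phi|$, the pullback of $I_{\rm CSH}$ under ${\sf C}$ equals $I_{\rm CSH}$. Because ${\sf C}$ is an invertible linear (real-linear) field redefinition, it maps stationary points to stationary points.

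To avoid relying on boundary terms, I will also check the equations explicitly.
\begin{itemize}
\item The scalar equation $D_\mu D^\mu\phi+\partial V/\partial\phi^*=0$ complex-conjugates into the same equation for $(\phi^*,-A)$. This uses $V'(|\phi|)\phi/|\phi|\mapsto$ its conjugate.
\item The gauge equation $\frac{\kappa}{2}\epsilon^{\mu\nu\rho}F_{\nu\rho}=-j^\mu$ involves the current $j^\mu=ie[\phi^*D^\mu\phi-\phi(D^\mu\phi)^*]\sqrt{-g}$. This current flips sign under ${\sf C}$, as does $F$, so the equation is preserved.
\end{itemize}
The manifold is fixed and untouched.

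\textbf{Triad (H2) and boundary data (H3).} (H2) holds trivially: the same triad $e^a$ can be kept, since $g$ is identical. (H3) follows from the hypothesis that $\mathcal B$ is ${\sf C}$-invariant. The transformed configuration therefore obeys the same boundary problem with data ${\sf C}\mathcal B=\mathcal B$.

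\textbf{Winding.} I define $n$ as the degree of $\phi/|\phi|$ on the asymptotic (or core-enclosing) cycle. Equivalently, by finite-energy covariant constancy, $n=(e/2\pi)\oint A$. Both definitions reverse sign under conjugation and under $A\to-A$. Hence the new branch has winding $-n$ and flux $-\Phi_n$.

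\textbf{Curvature.} Equation~\eqref{eq:curvatureblind} is immediate because the Riemann tensor is a functional of $g_{\mu\nu}$ alone.

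\textbf{Main obstacle.} The only step needing care is making precise that ${\sf C}$ is a symmetry of the \emph{equations}, not merely of the Lagrangian up to boundary terms. It must also be compatible with whatever boundary terms enter $\mathcal B$, for example Dirichlet data for $A_t$ or sources for $\phi$. I will handle this by using the explicit equations of motion above rather than the action, so that boundary terms play no role beyond (H3). I will also note that the metric-independence of the Chern-Simons term is what guarantees that $T_{\mu\nu}$ contains no odd $A$-dependence.
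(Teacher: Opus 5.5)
Your proposal is correct and follows essentially the same route as the paper: you check that ${\sf C}$ maps the Einstein equation to itself via Eq.~\eqref{eq:Tparity}, maps the scalar equation to its complex conjugate, and preserves the gauge equation because both $F$ and the current flip sign, then invoke (H3) and read off the curvature equality from the common metric. Your degree-based definition of the winding, in place of the paper's $\phi\propto e^{in\varphi}$ (which tacitly assumes a radial ansatz that the theorem is meant to avoid), and your sign and density conventions for the current are harmless refinements that leave the argument unchanged.
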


\begin{proof}
It suffices to check that ${\sf C}$ maps each field equation to itself or to
its complex conjugate. The Einstein equation
$G_{\mu\nu}-g_{\mu\nu}/L^2=8\pi G\,T_{\mu\nu}$ is preserved because the
metric is untouched and $T_{\mu\nu}$ is invariant by
Eq.~\eqref{eq:Tparity}. The scalar equation
$D_\mu D^\mu\phi+\partial V/\partial\phi^*=0$ maps to its complex conjugate,
using $D_\mu\phi\mapsto(D_\mu\phi)^*$ from Eq.~\eqref{eq:Cfields} and the
dependence of $V$ on $|\phi|$ alone, which is hypothesis (H1). The gauge
equation is
\begin{equation}
\frac{\kappa}{2}\epsilon^{\mu\nu\rho}F_{\nu\rho}=J^\mu,
\qquad
J^\mu=ie\left[\phi^*D^\mu\phi-\phi\left(D^\mu\phi\right)^*\right].
\label{eq:gaugeEOM}
\end{equation}
Under ${\sf C}$ the left-hand side reverses sign by
Eq.~\eqref{eq:Cfields}, and the right-hand side reverses sign because the two
terms of $J^\mu$ are exchanged. The equation is therefore preserved. Since
$\mathcal B$ is ${\sf C}$-invariant by hypothesis (H3), the image is
admissible. The winding of the scalar phase reverses because
$\phi\propto e^{in\varphi}\mapsto e^{-in\varphi}$. Equality of the metrics is
immediate from Eq.~\eqref{eq:Cpairing}, and Eq.~\eqref{eq:curvatureblind}
follows because every local curvature tensor is a functional of the metric
alone.
\end{proof}

Choose the same local Lorentz gauge for the two charge-conjugate solutions.
Because the metrics coincide, their triads and torsion-free spin connections
can then be chosen identically, and hence
\begin{equation}
\cA^{(\pm)}[n]=\cA^{(\pm)}[-n].
\label{eq:connectionblind}
\end{equation}
For any closed curve $\gamma$, define
\begin{equation}
\cU_\gamma^{(\pm)}(n)=
P\exp\left[-\oint_\gamma\cA^{(\pm)}[n]\right].
\label{eq:gravholonomy}
\end{equation}

\begin{theorem}[Holonomy separation]
\label{thm:holsep}
Under the hypotheses of Theorem~\ref{thm:pairing},
\begin{equation}
\Tr_{\mathcal R}\cU_\gamma^{(\pm)}(n)
=
\Tr_{\mathcal R}\cU_\gamma^{(\pm)}(-n)
\label{eq:gravholonomyblind}
\end{equation}
for every representation $\mathcal R$ and every closed curve $\gamma$.
Moreover, denoting by $\cG_{\rm grav}$ either a closed gravitational Wilson
network or an open or boundary-anchored network supplied with matched
dressing data in the sense of Definition~\ref{def:dressing},
\begin{equation}
\cG_{\rm grav}[n]=\cG_{\rm grav}[-n].
\label{eq:WilsonAlgebraBlind}
\end{equation}
The matter flux meanwhile reverses, $\Phi_n\mapsto-\Phi_n$.
\end{theorem}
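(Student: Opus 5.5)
The plan is to reduce everything to the pointwise identity of the gravitational connections, Eq.~\eqref{eq:connectionblind}, which itself rests on Theorem~\ref{thm:pairing}. By that theorem the ${\sf C}$-image of an admissible branch has literally the same metric $g_{\mu\nu}$ on the same manifold. By (H2) the triad is invertible, so we may fix one triad $e^a{}_\mu$ with $g_{\mu\nu}=\eta_{ab}e^a{}_\mu e^b{}_\nu$ and use it for both solutions. The torsion-free spin connection is then the unique solution of $\dd e^a+\epsilon^{a}{}_{bc}\omega^b\wedge e^c=0$. It is an algebraic functional of $e$ and $\partial e$, so the same triad yields the same $\omega^a$. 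Here we note that torsion-freeness is not spoiled by the matter sector: the CSH action couples to the metric only through $\sqrt{-g}$ and $g^{\mu\nu}$, and its Chern-Simons term is metric independent. Hence $\cA^{(\pm)a}=\omega^a\pm e^a/L$ coincide pointwise for $n$ and $-n$.

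Next I take path-ordered exponentials. For any closed curve $\gamma$ with a chosen base point, Eq.~\eqref{eq:gravholonomy} defines $\cU^{(\pm)}_\gamma$ as the solution of a linear ODE along $\gamma$ whose coefficient is the pulled-back connection. Identical coefficients give identical solutions, so $\cU^{(\pm)}_\gamma(n)=\cU^{(\pm)}_\gamma(-n)$ as group elements. Applying any representation $\mathcal R$ and taking the trace gives Eq.~\eqref{eq:gravholonomyblind}.

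I also need to address gauge independence, since the two solutions could a priori be presented in different local Lorentz gauges. A different choice of triad changes $\cA^{(\pm)}$ by a gauge transformation. This conjugates the holonomy by the gauge function at the base point, and the trace is invariant under conjugation. So the equality of characters does not depend on the common-gauge choice. For infinite-dimensional $\mathcal R$ I will only claim equality whenever either side is defined, since they are the same object.

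For networks, a closed network is a contraction of holonomies along its edges with invariant intertwiners at its junctions, and each edge holonomy is equal for $\pm n$ by the argument above. For open or boundary-anchored networks, the amplitude also involves endpoint states, junction intertwiners and the boundary frame. Definition~\ref{def:dressing} fixes these identically, so the full amplitude is the same multilinear expression evaluated on equal inputs, which gives Eq.~\eqref{eq:WilsonAlgebraBlind}.

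I expect the main obstacle to be precisely this dressing issue. Open lines are not gauge invariant, so the claim holds only because the dressing is matched and a common Lorentz frame is used. I will state this explicitly rather than assert gauge-invariant equality.

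Finally, the flux reversal follows because ${\sf C}$ sends $A\mapsto -A$, hence $F\mapsto -F$ and $\Phi=\int B\mapsto -\Phi$. Equivalently, the winding $n\mapsto -n$ in Eq.~\eqref{eq:quantized} gives $\Phi_{-n}=-\Phi_n$.
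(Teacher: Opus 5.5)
Your proposal is correct and follows the paper's proof. The paper likewise fixes a common local Lorentz gauge, so the identical metrics give identical triads and torsion-free spin connections, hence $\cA^{(\pm)}[n]=\cA^{(\pm)}[-n]$ pointwise. It then concludes that the path-ordered exponentials agree as group elements (so every character agrees), that dressed network amplitudes agree path by path by Definition~\ref{def:dressing}, and that $\Phi_n\mapsto-\Phi_n$ is immediate from ${\sf C}$. Your extra remarks on uniqueness of the torsion-free connection, ODE uniqueness for the holonomy and gauge independence of the characters are sound refinements. One small caveat: a local Lorentz transformation winding an odd number of times along $\gamma$ can flip the sign of spinorial $\mathrm{SL}(2,\mathbb R)$ characters, but this does not matter because your main argument uses one common triad.
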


\begin{proof}
Equation~\eqref{eq:connectionblind} is an equality of connection one-forms at
every point of the manifold. The path-ordered exponential in
Eq.~\eqref{eq:gravholonomy} is a functional of the connection restricted to
the contour, so the two path-ordered exponentials agree as group elements,
and their characters agree in every representation. For an open or
boundary-anchored network the amplitude is additionally a function of the
endpoint states, representation, boundary frame and junction intertwiners.
These are equal by Definition~\ref{def:dressing}, so the amplitudes agree
path by path. Reversal of the flux is Eq.~\eqref{eq:C}.
\end{proof}

\begin{remark}
\label{rem:noflatness}
Inside the matter-supported core the gravitational connections need not be
flat, so the algebraic path-independence construction used for flat
Chern-Simons backgrounds cannot be applied unchanged to a contour that
crosses the core. The proof above does not use path independence, a
deformation to a geodesic, or a completeness theorem for holonomies in a
sourced geometry. It uses only pointwise equality of the full gravitational
connections along the same contour with the same dressing data. This is why
Theorem~\ref{thm:holsep} strengthens metric blindness into equality of every
closed holonomy character rather than merely of geodesic lengths.
\end{remark}

\subsection{Asymptotic data, Virasoro modes and the BTZ corollary}
\label{sec:btzholonomy}

\begin{corollary}[Ba\~nados functions and Virasoro modes]
\label{cor:banados}
Assume in addition that the matter fields decay fast enough to preserve
standard Brown-Henneaux boundary conditions, so that the asymptotic metric is
parametrized by two Ba\~nados functions $\mathcal L_+(x^+)$ and
$\mathcal L_-(x^-)$~\cite{BrownHenneaux1986,Carlip2023Review}. Then
\begin{equation}
\mathcal L_\pm(x^\pm;n)=\mathcal L_\pm(x^\pm;-n),
\qquad
L_m(n)=L_m(-n),\quad \bar L_m(n)=\bar L_m(-n)
\label{eq:banadosblind}
\end{equation}
for every integer $m$, and in particular
\begin{equation}
\langle T_{++}\rangle_n=\langle T_{++}\rangle_{-n},
\qquad
\langle T_{--}\rangle_n=\langle T_{--}\rangle_{-n}.
\label{eq:boundaryblind}
\end{equation}
\end{corollary}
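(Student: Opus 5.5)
The plan is to reduce the corollary to the pointwise equality of metrics supplied by Theorem~\ref{thm:pairing}, together with the fact that the Ba\~nados functions are read off from the asymptotic metric in a fixed coordinate system. The charge-conjugation map ${\sf C}$ of Eq.~\eqref{eq:C} acts only on $(\phi,A_\mu)$ and leaves $g_{\mu\nu}$ literally unchanged. The boundary data $\mathcal B$ are ${\sf C}$-invariant by (H3). Consequently the winding-$n$ and winding-$(-n)$ configurations share the same manifold, the same metric components in the same coordinates, and the same boundary conditions. The first step is to check that the Brown-Henneaux falloff assumption is itself ${\sf C}$-invariant. The gravitational falloff concerns only $g_{\mu\nu}$. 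The matter falloff concerns $|\phi|-v$, $D_\mu\phi$ and $F_{\mu\nu}$, whose magnitudes are unchanged by Eq.~\eqref{eq:Cfields}. Hence if the winding-$n$ branch obeys the hypothesis, so does its image.

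Next, I would place both solutions in Fefferman-Graham gauge near the boundary. Since the metric is the same tensor, the same radial coordinate $\rho$ and boundary coordinates $x^\pm$ serve both, and no separate gauge choice is needed. Using the same Lorentz frame, Eq.~\eqref{eq:connectionblind} gives equal gravitational connections. In highest-weight gauge, $\mathcal L_\pm$ appear as the coefficients of the $L_{\mp1}$ components of the asymptotic connections $\cA^{(\pm)}$. They are therefore equal as functions of $x^\pm$. Equivalently, $\mathcal L_\pm$ are the coefficients of $(\dd x^\pm)^2$ in the Ba\~nados form of $g_{\mu\nu}$, so they are equal directly from Eq.~\eqref{eq:Cpairing}.

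The Virasoro modes follow by Fourier decomposition on the boundary circle. Up to the standard normalization $c/24$ shift, $L_m$ and $\bar L_m$ are integrals of $\mathcal L_\pm(x^\pm)e^{imx^\pm}$, possibly with matter contributions to the surface charges. The holographic stress tensor $\langle T_{\pm\pm}\rangle\propto\mathcal L_\pm$ then yields Eq.~\eqref{eq:boundaryblind} immediately.

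The step I expect to be the main obstacle is the possibility of matter contributions to the asymptotic charges, namely boundary terms from the scalar or the Chern-Simons field that enter the Brown-Henneaux surface charges. To handle this, I will argue as follows. The assumed falloff makes the matter contribution to the Regge-Teitelboim boundary term vanish at the boundary, so the charges are functionals of the metric alone. Any residual matter term is built from $T_{\mu\nu}$, which is ${\sf C}$-even by Eq.~\eqref{eq:Tparity}. The Chern-Simons term is metric independent and does not couple to the diffeomorphism charges at leading order. In either case, the charges coincide for $n$ and $-n$.
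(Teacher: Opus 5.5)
Your proposal is correct and follows the paper's reasoning: the corollary rests on the pointwise equality of the metric and of $\cA^{(\pm)}$ (Theorem~\ref{thm:pairing}, Eq.~\eqref{eq:connectionblind}), so in a common Fefferman--Graham chart the Ba\~nados functions, their Fourier modes $L_m,\bar L_m$ and $\langle T_{\pm\pm}\rangle$ coincide, and the paper, like your main line, reads the Brown--Henneaux hypothesis as making these functionals of the metric alone. One correction to your last paragraph: metric independence removes the matter Chern--Simons term from $T_{\mu\nu}$ but not from the diffeomorphism surface charges (the vortex gauge field does not decay, $A_\varphi\to n/e$, and its Noether charge or boundary terms give Sugawara-type pieces quadratic in $A$), so the correct reason these terms agree for $n$ and $-n$ is that ${\sf C}$ acts internally, commutes with diffeomorphisms, and leaves the action and boundary data invariant by (H3), which makes every asymptotic-diffeomorphism charge ${\sf C}$-even.
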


The equality is therefore not restricted to the two conserved zero modes. It
includes the full classical boundary-graviton data within the chosen
Brown-Henneaux phase space.

\begin{corollary}[BTZ charges, horizons and entropies]
\label{cor:btz}
Assume in addition that the solution is stationary with asymptotic BTZ
connection. The angular-cycle holonomies may then be conjugated to diagonal
representatives with
traces~\cite{BHTZ1993,Carlip1995BTZ,Carlip2005Review}
\begin{align}
\Tr\rho_L&=2\cosh\left[\frac{\pi(r_+-r_-)}{L}\right],\nonumber\\
\Tr\rho_R&=2\cosh\left[\frac{\pi(r_++r_-)}{L}\right],
\label{eq:BTZholtraces}
\end{align}
and the same parameters determine
\begin{equation}
M=\frac{r_+^2+r_-^2}{8GL^2},
\qquad
J=\frac{r_+r_-}{4GL},
\label{eq:BTZcharges}
\end{equation}
together with the Brown-Henneaux zero modes
\begin{equation}
\mathcal L_\pm=\frac{(r_+\pm r_-)^2}{16GL},
\qquad
c=\bar c=\frac{3L}{2G}.
\label{eq:BHdata}
\end{equation}
Then
\begin{equation}
\rho_{L,R}(n)\sim\rho_{L,R}(-n),\qquad
M(n)=M(-n),\quad J(n)=J(-n),\quad r_\pm(n)=r_\pm(-n),
\label{eq:BTZblind}
\end{equation}
where $\sim$ denotes equality of conjugacy classes, and
\begin{equation}
S_{\rm BH}(n)=S_{\rm BH}(-n)=\frac{2\pi r_+}{4G},
\qquad
S_{\rm Cardy}(n)=S_{\rm Cardy}(-n),
\label{eq:entropyblind}
\end{equation}
with the standard Brown-Henneaux identification~\cite{Strominger1998}.
\end{corollary}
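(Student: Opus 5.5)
The corollary follows from Theorems~\ref{thm:pairing} and~\ref{thm:holsep}, with no dynamical input beyond them. The idea is that every quantity in Eqs.~\eqref{eq:BTZblind} and~\eqref{eq:entropyblind} is a functional of the metric or of the gravitational connections alone. Step one uses Theorem~\ref{thm:pairing}, which gives $g_{\mu\nu}[n]=g_{\mu\nu}[-n]$ pointwise. Stationarity and the BTZ asymptotics are properties of the metric, so the charge-conjugate branch is stationary and asymptotically BTZ as well. It shares the same Killing vectors and the same asymptotic region, and it is admissible by (H3).

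Step two treats the holonomies. By Eq.~\eqref{eq:connectionblind} we choose the same local Lorentz gauge for both branches. Theorem~\ref{thm:holsep} then gives $\cU^{(\pm)}_\gamma(n)=\cU^{(\pm)}_\gamma(-n)$ as group elements for the angular cycle $\gamma$, so the two holonomies are equal and not merely conjugate. A different gauge choice on either side conjugates $\rho_{L,R}$ by a group element, which yields the stated equality of conjugacy classes. One point needs care. The angular cycle must be taken in the asymptotic region, where the matter has decayed and the connection is flat. There the holonomy depends only on the free homotopy class of the cycle, and the traces take the BTZ form of Eq.~\eqref{eq:BTZholtraces}. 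We do not need path independence through the core, since the same contour is used for both branches, as in Remark~\ref{rem:noflatness}.

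Step three inverts Eq.~\eqref{eq:BTZholtraces}. We take $r_+\ge|r_-|\ge0$ and the nonextremal hyperbolic branch, so both traces exceed $2$. The equalities $\Tr\rho_{L,R}(n)=\Tr\rho_{L,R}(-n)$ then fix $\pi(r_+\mp r_-)/L$ uniquely, because $\cosh$ is injective on $[0,\infty)$. The ordering is pinned by the common orientation of the matched frames. Hence $r_\pm(n)=r_\pm(-n)$. Substituting into Eqs.~\eqref{eq:BTZcharges} and~\eqref{eq:BHdata} gives equality of $M$, $J$ and $\mathcal L_\pm$. This step also follows directly from Corollary~\ref{cor:banados}, because the zero modes of identical asymptotic metrics coincide.

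Step four treats the entropies. The area $2\pi r_+$ equals $S_{\rm BH}=2\pi r_+/4G$, and $c$ depends only on $L$ and $G$. Cardy's formula $S_{\rm Cardy}=2\pi\sqrt{c\mathcal L_+/6}+2\pi\sqrt{\bar c\mathcal L_-/6}$, with the Strominger identification, is a function of $(c,\mathcal L_\pm)$ alone and is therefore even in $n$. We could instead use the horizon directly: the horizon is a metric locus, so its radius is identical for the two branches.

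The main obstacle is the sign and branch ambiguity in inverting $\cosh$. The sign of $r_-$, which is the sign of $J$, enters only through which holonomy is labelled $L$ and which $R$. I would resolve this by noting that the two branches share literally the same $\cA^{(\pm)}$, not merely conjugate ones, so the labels are inherited identically. Equality of $J$ itself, including its sign, then follows because $J$ is a metric functional given by the Komar or Brown-York integral.
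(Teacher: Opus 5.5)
Your proposal is correct and follows essentially the same route as the paper, which obtains Corollary~\ref{cor:btz} directly from the pointwise equality of the metric and of $\cA^{(\pm)}$ (Theorems~\ref{thm:pairing} and~\ref{thm:holsep}), together with the observation in Appendix~\ref{sec:gravcsapp} that the angular-cycle conjugacy classes fix $r_+\pm r_-$ and hence $M$, $J$, $\mathcal L_\pm$ and the Cardy entropy. The one overstatement is your claim that the connection is exactly flat in the asymptotic region: the matter tail never vanishes at finite radius, so flatness holds only in the limit. Nothing in your argument depends on this, because you compare identical contours and take the BTZ form of the traces as the stated hypothesis.
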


The complete boundary states need not coincide. Their distinction can reside
in a matter topological-defect sector that is not generated by the stress
tensor, which is the possibility developed in
Sec.~\ref{sec:interferometer}.

\subsection{Exhaustion of classical Einstein-gravity channels}
\label{sec:exhaustion}

Within classical $2+1$-dimensional Einstein gravity with an invertible triad,
the nontrivial data relevant here fall into three channels: local curvature
fixed by the matter source, global gravitational parallel transport, and
asymptotic boundary modes~\cite{Carlip2005Review,Carlip2023Review}. Equations
\eqref{eq:curvatureblind}, \eqref{eq:connectionblind} and
\eqref{eq:banadosblind} establish equality in all three. Defining
\begin{equation}
\mathfrak D_{\rm grav}
\equiv
\left\{g_{\mu\nu},\cA^{(+)},\cA^{(-)},\mathcal L_+,\mathcal L_-\right\},
\qquad
\mathfrak D_{\rm grav}(n)=\mathfrak D_{\rm grav}(-n),
\label{eq:gravdataexhaustion}
\end{equation}
we conclude that every classical observable that is a functional only of
these common fields, and that uses matched relational or boundary dressing
data, is orientation blind. This conclusion does not assume that closed
holonomies alone reconstruct a sourced geometry. It uses equality of the full
metric and connections before any observable is evaluated.

\subsection{Consequences for geometric and Wilson-line entanglement}
\label{sec:tparity}

\begin{corollary}[Geometric entanglement observables]
\label{cor:geom}
At leading order in $1/G$, RT surfaces~\cite{RyuTakayanagi2006}, HRT
surfaces~\cite{HRT2007}, the EWCS, the
holographic entanglement of purification, the reflected entropy, the mutual
information and the geometric Markov gap are functionals of the same
gravitational fields and boundary regions. Equality of the two metrics
therefore gives
\begin{equation}
S(A;n)=S(A;-n),\qquad E_W(n)=E_W(-n),
\label{eq:geomblind}
\end{equation}
and in AdS$_3$, where the cross section is a
length~\cite{NguyenEtAl2018,TakayanagiUmemoto2018},
\begin{equation}
\EPH(A{:}B;n)\equiv\frac{E_W(A{:}B;n)}{4G}
=\EPH(A{:}B;-n).
\label{eq:purificationblind}
\end{equation}
\end{corollary}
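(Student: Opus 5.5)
The plan is to reduce every quantity in Corollary~\ref{cor:geom} to a functional of the metric $g_{\mu\nu}$ together with boundary data that are common to the two configurations, and then invoke Theorem~\ref{thm:pairing}. That theorem gives $g_{\mu\nu}[n]=g_{\mu\nu}[-n]$ pointwise on the fixed manifold with identical boundary data $\mathcal B$. We also use the exhaustion statement~\eqref{eq:gravdataexhaustion}, so that the conformal boundary, the cutoff surface and the boundary regions $A,B$ can be chosen identically. The argument therefore never compares two different geometries. It evaluates one variational problem twice.

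\textbf{Step 1 (RT/HRT).} For a boundary region $A$, $S(A)$ at leading order is $\mathrm{Area}(\gamma_A)/4G$. Here $\gamma_A$ is the minimal (RT) or extremal (HRT) codimension-two surface homologous to $A$. The area functional, the homology constraint and the extremality condition depend only on $g_{\mu\nu}$, the manifold and $\partial A$. The admissible surfaces and their areas are therefore the same set for $n$ and $-n$. We take the minimum over that set, and when several extremal surfaces exist we take the same minimal-area one. This gives $S(A;n)=S(A;-n)$. The cutoff regularization is also common, because it is fixed by $\mathcal B$ via (H3).

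\textbf{Step 2 (entanglement wedge and EWCS).} The entanglement wedge of $A\cup B$ is the bulk domain of dependence bounded by $A\cup B$ and $\gamma_{AB}$. By Step 1 it is the same region, including its connected or disconnected phase. $E_W(A{:}B)$ is the minimal area of a surface splitting that wedge into $A$- and $B$-adjacent parts, so it is again a minimization of a metric functional over an identical domain. In AdS$_3$ this area is a geodesic length, which gives~\eqref{eq:geomblind} and hence~\eqref{eq:purificationblind} by the definition $\EPH=E_W/4G$. The remaining quantities then follow algebraically. The mutual information is $I=S_A+S_B-S_{AB}$. The reflected entropy is $S_R=2E_W/4G$ at leading order. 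The Markov gap is $S_R-I$. Each is built from already-equal quantities.

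\textbf{Main obstacle.} The only subtlety is making sure the comparison is well posed, and that no hidden matter dependence enters. We would first stress that the RT/HRT and EWCS prescriptions at order $1/G$ involve no matter fields; bulk matter enters only at $O(G^0)$ through bulk entropy, which is deferred to the replica analysis. Second, we would invoke (H3) to guarantee that the boundary regions, time slice and cutoff are defined by identical data. Ties between competing extremal surfaces need no separate treatment, because the two candidate sets coincide surface by surface, so any tie-breaking rule applied to one applies verbatim to the other.
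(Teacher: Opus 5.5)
Your proposal is correct and follows the paper's own argument. The paper observes that the RT/HRT entropies, the EWCS, and the derived quantities $I=S_A+S_B-S_{AB}$, $S_R=2E_W/4G$ and $\Delta_M=S_R-I$ are functionals of the common metric and the chosen boundary regions, so Theorem~\ref{thm:pairing} immediately gives the equalities; your Steps 1--2 (identical candidate sets, identical wedge and phase, same minimizer) simply spell that out. One cosmetic point: the regions $A,B$ and the cutoff are common by choice and because the metric is shared, not because of the exhaustion statement~\eqref{eq:gravdataexhaustion}, which concerns fields rather than regions.
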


For a connected static wedge, $\EPH$ obeys the standard correlation corridor
\begin{equation}
\frac{1}{2}I(A{:}B;n)\leq \EPH(A{:}B;n)
\leq \min\{S(A;n),S(B;n)\},
\label{eq:purificationbounds}
\end{equation}
and every quantity entering both bounds is separately even under $n\to -n$.
These inequalities therefore constrain the common geometric correlation cost
but contain no information about vortex orientation. As stated in
Sec.~\ref{sec:intro}, $\EPH$ denotes the holographic prescription rather than
the unrestricted field-theoretic minimization.

For a boundary interval in spin-$2$ gravity,
Ref.~\cite{AmmonCastroIqbal2013} gives an equivalent Chern-Simons
description of the RT geodesic as an open Wilson-line amplitude in an
infinite-dimensional highest-weight representation, using the endpoint
conditions $U_i=U_f=\mathbf 1$ and the replica-limit assignment
$\sqrt{2c_2}\to c/6$. With matched dressing data in the sense of
Definition~\ref{def:dressing}, Eq.~\eqref{eq:connectionblind} gives identical
RT Wilson amplitudes for $n$ and $-n$.

The EWCS is generally a bulk-to-bulk geodesic whose endpoints lie on RT
surfaces. A fully gauge-invariant Chern-Simons representation would require
specified bulk endpoint states or a Wilson network with explicit junction
data. Such a construction is not supplied by the single-interval
boundary-anchored prescription and is not assumed in deriving
Eq.~\eqref{eq:geomblind}, whose equality follows from the common metric.
Nevertheless, any proposed gravitational Wilson network for the EWCS would
also be identical for the charge-conjugate pair once its contours, endpoint
states, representations and junction data are fixed identically.

Using $S_R=2\EPH=2E_W/(4G)$ at leading order, both the reflected entropy and
the geometric Markov gap $\Delta_M=S_R-I=2\EPH-I$~\cite{HaydenParrikarSorce2021}
obey
\begin{equation}
\Delta_M(n)=\Delta_M(-n).
\label{eq:markovblind}
\end{equation}
The equality includes stationary frame dragging, because
Eq.~\eqref{eq:Tparity} holds for every component of the stress tensor. It is
not a peculiarity of geodesic observables: any leading semiclassical
observable that is a functional solely of the common gravitational fields, or
of identically dressed gravitational Wilson amplitudes, is equal for the two
orientations. Generic bulk quantum corrections lie outside this classical
pure-gravity corollary. Section~\ref{sec:replicablind} shows, however, that all
charge-conjugation-even fixed-filling replica corrections remain orientation
blind. Sensitivity requires a charged or charge-conjugation-odd insertion,
explicit or anomalous symmetry breaking, or unmatched filling and dressing
data.

\subsection{Euclidean continuation and marked replica fillings}
\label{sec:euclidean}

The holonomy theorem above is Lorentzian, whereas the purification and
reflected-entropy constructions use a Euclidean replica path integral. For a
real Euclidean triad and spin connection, the gravitational variables combine
into the $\mathrm{SL}(2,\mathbb C)$ connections
\begin{equation}
\cA_E=\omega+\frac{i}{L}e,
\qquad
\overline{\cA}_E=\omega-\frac{i}{L}e.
\label{eq:EuclideanConnections}
\end{equation}
The identical Lorentzian metrics of the charge-conjugate pair admit identical
matched continuations, so
\begin{equation}
\cA_E[n]=\cA_E[-n],
\qquad
\overline{\cA}_E[n]=\overline{\cA}_E[-n].
\label{eq:EuclideanBlind}
\end{equation}
For rotating BTZ the regular Euclidean saddle is commonly described by
complexified fields. Equation~\eqref{eq:EuclideanBlind} then refers to the
same analytic-continuation prescription applied to the identical Lorentzian
geometry and boundary data. Charge conjugation still leaves the Euclidean
matter Chern-Simons density invariant while reversing the background flux
seen by a charged topological line.

A Euclidean BTZ exterior is a solid torus. The boundary torus alone does not
specify the bulk saddle: one must choose a marking $\fmark$ that identifies
the contractible cycle. For the line construction, $\fmark$ also denotes the
fixed framing, replica gluing and common boundary edge-sector data required
for Chern-Simons
sewing~\cite{Witten1989,ElitzurEtAl1989,Witten1992Sewing,Carlip2023Review}.
We define
\begin{equation}
\nu_{\fmark}
\equiv
\Lk_{\fmark}(\Gamma_p,\cV_n).
\label{eq:markedlinking}
\end{equation}
All line-amplitude formulas below refer to one fixed marked filling. Large
diffeomorphisms that change the marking relate different bulk fillings and
can change the representative linking data. A nonperturbative modular sum
over such fillings is a different observable and is not evaluated here.

\subsection{Fixed-filling replica blindness beyond the classical saddle}
\label{sec:replicablind}

The equality of the two Euclidean metrics is sufficient for the leading
geometric entropy, but the symmetry gives a stronger path-integral identity.
Let $\mathfrak F_{q,\fmark}(n)$ be the gauge-fixed integration domain on an
integer-$q$ replica filling with marking $\fmark$ and boundary conditions that
prepare winding $n$. For sources $\mathcal J_+$ coupled only to
charge-conjugation-even operators, define
\begin{equation}
Z_{q,\fmark}[n;\mathcal J_+]
=\int_{\mathfrak F_{q,\fmark}(n)}\!\mathcal D\mu_{q,\fmark}\,
\exp\!\left[-I_E+\sum_a\int \mathcal J_{+,a}\mathcal O_{+,a}\right],
\label{eq:replicapartition}
\end{equation}
where $\mathcal D\mu_{q,\fmark}$ includes the gravitational and matter gauge
fixing, ghosts and the chosen edge-mode or operator-algebra prescription.

\begin{definition}[Quantum-admissible replica problem]
\label{def:quantumadmissible}
The fixed-filling replica problem in Eq.~\eqref{eq:replicapartition} is called
\emph{quantum admissible} if: (Q1) charge conjugation maps the gauge-fixed
integration cycle $\mathfrak F_{q,\fmark}(n)$ bijectively onto
$\mathfrak F_{q,\fmark}(-n)$, including any complexified cycle used for
rotating saddles, without changing the marking, gluing or even sources;
(Q2) the gauge fixing, functional measure and regulator are
charge-conjugation invariant, so there is no charge-conjugation anomaly;
(Q3) the same charge-conjugation-covariant algebra and gauge-edge prescription
is used in the two sectors; and (Q4) whenever a noninteger replica number is
required, the continuation in $q$ is performed by the same prescription in the
two sectors.
\end{definition}

\begin{theorem}[Fixed-filling replica blindness]
\label{thm:replicablind}
For every quantum-admissible fixed filling, every positive integer $q$ and
every collection of charge-conjugation-even sources,
\begin{equation}
Z_{q,\fmark}[n;\mathcal J_+]
=Z_{q,\fmark}[-n;\mathcal J_+].
\label{eq:replicablind}
\end{equation}
The equality is a change-of-variables identity and is not restricted to the
classical saddle or to one-loop order.
\end{theorem}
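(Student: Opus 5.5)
The proof is a change of variables in the path integral \eqref{eq:replicapartition}, generated by the charge-conjugation map ${\sf C}$ of Eq.~\eqref{eq:C}. It is extended to the full replica field content: the metric, the triad and the spin connection are left fixed, while ${\sf C}$ acts on $\phi$, on $A_\mu$ and on the matter ghosts and antighosts. For the Abelian ghosts, $c\mapsto-c$ and $\bar c\mapsto-\bar c$, so that the BRST transformation $\delta A=\dd c$ is covariant. The transformation is pointwise and involutive. By (Q1) it maps $\mathfrak F_{q,\fmark}(n)$ bijectively onto $\mathfrak F_{q,\fmark}(-n)$, preserving the marking $\fmark$, the replica gluing and the even sources. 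It is therefore a legitimate relabelling of the integration variable between the two domains, and this includes any complexified cycle used for rotating saddles.

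Next I would check the integrand term by term.

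1. The Euclidean action $I_E$ is ${\sf C}$-invariant. The Einstein--Hilbert part, or equivalently the Chern--Simons pair of Eq.~\eqref{eq:EHCS} in the variables \eqref{eq:EuclideanConnections}, is untouched because the gravitational fields are fixed. The scalar kinetic and potential terms are invariant by Eq.~\eqref{eq:Cfields} and because $V$ depends only on $|\phi|$. The matter Chern--Simons density is invariant by Eq.~\eqref{eq:CSC}, and this remains true on each replica sheet and across the gluing surfaces, since the gluing is ${\sf C}$-compatible by (Q1) and (Q3).
2. The boundary terms and the edge-mode contributions are invariant because the edge-sector prescription is the same ${\sf C}$-covariant one in both sectors, by (Q3), and because the boundary data obey (H3).
3. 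The gauge-fixing term is invariant: for a Lorenz-type gauge $\nabla^\mu A_\mu=0$, the fixing term is quadratic in $A$, the ghost term $\bar c\,\nabla^2 c$ is even, and the gravitational gauge fixing is untouched.
4. The source term $\sum_a\int\mathcal J_{+,a}\mathcal O_{+,a}$ is invariant by the definition of charge-conjugation-even operators.
5. The measure $\mathcal D\mu_{q,\fmark}$, together with its regulator, is invariant by (Q2). This is where the absence of a charge-conjugation anomaly is used. The Jacobian of a linear involution on bosonic and Grassmann variables is $\pm1$ formally, and (Q2) fixes it to $1$ after regularization.

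Combining these steps, the integrand over $\mathfrak F_{q,\fmark}(n)$ pulls back exactly to the integrand over $\mathfrak F_{q,\fmark}(-n)$. This gives Eq.~\eqref{eq:replicablind} for every positive integer $q$. The argument never expands around a saddle, so the identity holds to all orders and nonperturbatively wherever the integral is defined. For noninteger $q$, (Q4) ensures that the two families of integer-$q$ equalities are continued by the same prescription. The continued quantities therefore also coincide, though this continuation is needed only for the corollaries.

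I expect the main obstacle to be making precise that ${\sf C}$ acts trivially on the gauge-fixed gravitational sector while still mapping the winding-$n$ boundary conditions to winding-$-n$ ones inside the same marked filling. The winding is a boundary or asymptotic condition on $\phi$, and its sign must flip without altering the gravitational marking or the framing data in $\fmark$. I would handle this by showing that the boundary conditions factor into a gravitational part, which ${\sf C}$ fixes, and a matter part, which ${\sf C}$ conjugates, with $\phi\sim v e^{in\varphi}$ sent to $v e^{-in\varphi}$. I would then invoke (Q1) to state that this factorization is compatible with the gauge-fixed cycle. The anomaly question is handled by assumption (Q2), not proved.
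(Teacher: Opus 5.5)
Your proposal is correct and follows the paper's own proof. In both, ${\sf C}$ acts as a linear involution on $(\operatorname{Re}\phi,\operatorname{Im}\phi,A)$ with the metric held fixed, (Q1) carries the winding-$n$ cycle onto the winding-$-n$ cycle, the Euclidean action and the even sources are invariant, and (Q2) makes the regulated Jacobian, ghost measure and counterterms invariant. Your explicit checks of the ghosts, the gauge-fixing term and the edge contributions simply spell out what the paper absorbs into the quantum-admissibility conditions.
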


\begin{proof}
Charge conjugation acts on the field multiplet as the linear involution
\begin{equation}
(\operatorname{Re}\phi,\operatorname{Im}\phi,A)
\longmapsto
(\operatorname{Re}\phi,-\operatorname{Im}\phi,-A),
\label{eq:Cinvolution}
\end{equation}
and leaves the metric unchanged. It squares to the identity and is block
diagonal with entries $\pm1$ in any basis adapted to this splitting, so it is
invertible with unit-modulus Jacobian, and by (Q1) it carries the winding-$n$
integration cycle onto the winding-$-n$ cycle. Only invertibility, the unit Jacobian and
preservation of the cycle are used, so the step is valid on a real Euclidean
section and equally on the complexified cycles required for rotating saddles.
The Euclidean action is invariant because the scalar kinetic and potential
terms are invariant and $(-A)\wedge\dd(-A)=A\wedge\dd A$, the latter holding
irrespective of whether the Chern-Simons contribution is real or imaginary in
the Euclidean continuation. Condition (Q2) makes the regulated Jacobian, ghost
measure and counterterm scheme invariant, while every source insertion is
unchanged by construction. Changing variables in
Eq.~\eqref{eq:replicapartition} gives Eq.~\eqref{eq:replicablind}.
\end{proof}

For an $A$-branched replica, define the normalized R\'enyi entropy by
\begin{equation}
S_{q,\fmark}(A;n)=\frac{1}{1-q}
\log\!\left[\frac{Z_{q,\fmark}[n]}{Z_{1,\fmark}[n]^q}\right].
\label{eq:renyifixedfilling}
\end{equation}
Equation~\eqref{eq:replicablind} immediately gives
\begin{equation}
S_{q,\fmark}(A;n)=S_{q,\fmark}(A;-n),
\label{eq:renyiblind}
\end{equation}
for every integer $q>1$, and, by (Q4), for the von Neumann entropy obtained by
continuation to $q=1$. The same argument applies to uncharged
reflected-replica and multipartite replica constructions, provided their
gluing data satisfy Definition~\ref{def:quantumadmissible}.

\begin{corollary}[FLM and quantum-extremal-surface blindness]
\label{cor:flmblind}
Let $X$ be any candidate codimension-two curve in the common Euclidean
geometry and let $\Sigma_X$ be its associated bulk region. Through one loop,
the renormalized generalized entropy may be organized as
\begin{equation}
S_{\rm gen}[X;n]=\frac{\operatorname{Length}(X)}{4G_{\rm ren}}
+S_{\rm bulk}^{\rm alg}(\Sigma_X;n)+S_{\rm loc}[X;n],
\label{eq:generalizedentropy}
\end{equation}
where $S_{\rm loc}$ contains the local Wald-like and counterterm contributions
of the FLM expansion~\cite{FaulknerLewkowyczMaldacena2013}. Under the
hypotheses of Theorem~\ref{thm:replicablind},
\begin{equation}
S_{\rm gen}[X;n]=S_{\rm gen}[X;-n]
\quad\hbox{for every candidate }X.
\label{eq:genblind}
\end{equation}
Consequently the two sectors have the same set of quantum extremal surfaces
and the same minimum generalized entropy~\cite{EngelhardtWall2015}. Their
full gauge-fixed one-loop effective actions also agree,
\begin{equation}
\Gamma^{(1)}_{q,\fmark}[n]=\Gamma^{(1)}_{q,\fmark}[-n].
\label{eq:oneloopblind}
\end{equation}
\end{corollary}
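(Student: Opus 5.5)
The plan is to derive Corollary~\ref{cor:flmblind} from Theorem~\ref{thm:replicablind}, applied to a suitably enlarged family of charge-conjugation-even sources, and then read off each term of Eq.~\eqref{eq:generalizedentropy} separately.

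\textbf{Step 1: the area term.} The term $\operatorname{Length}(X)/4G_{\rm ren}$ is a functional of the common Euclidean metric alone. It is therefore equal for the two sectors by Theorem~\ref{thm:pairing} and Eq.~\eqref{eq:EuclideanBlind}. The renormalized coupling $G_{\rm ren}$ is also common, because by (Q2) the counterterm scheme is charge-conjugation invariant.

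\textbf{Step 2: the one-loop effective action, Eq.~\eqref{eq:oneloopblind}.} I will expand Eq.~\eqref{eq:replicapartition} about the common saddle and its charge conjugate. The linear involution of Eq.~\eqref{eq:Cinvolution} maps fluctuations about the winding-$n$ saddle onto fluctuations about the winding-$(-n)$ saddle. It also intertwines the two Hessians, $\mathsf C\,\Delta_n\,\mathsf C^{-1}=\Delta_{-n}$, and likewise the ghost operators. Since $\mathsf C$ is an orthogonal involution and the regulator is $\mathsf C$-invariant by (Q2), the regulated determinants agree. This is simply Theorem~\ref{thm:replicablind} truncated at quadratic order.

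\textbf{Step 3: the bulk algebraic entropy.} I will write $S_{\rm bulk}^{\rm alg}(\Sigma_X;n)$ as the $q\to1$ limit of bulk replica partition functions branched along $X$. For each fixed candidate $X$, this branching is a gluing prescription with no charged insertion. The fixed-filling identity therefore applies with $X$ as part of the marking data, provided the conical or branched geometry keeps (Q1)--(Q3). Condition (Q4) then passes the equality to $q=1$.

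\textbf{Step 4: the local term.} The local Wald-like and counterterm pieces $S_{\rm loc}$ are built from the metric, the extrinsic data of $X$, and $\mathsf C$-invariant matter expectation values such as $|\phi|^2$ and $T_{\mu\nu}$. The latter are even by Eq.~\eqref{eq:Tparity} together with the one-loop pairing of Step 2. Summing the four equal pieces gives Eq.~\eqref{eq:genblind} pointwise in $X$.

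\textbf{Step 5: extremal surfaces and the minimum.} Two functionals that agree on every candidate have the same critical points and the same infimum. Hence the two sectors share the same set of quantum extremal surfaces and the same minimum generalized entropy.

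\textbf{Main obstacle.} The delicate step is Step 3. Here the quantum-admissibility hypotheses must hold for the auxiliary replica branched along an arbitrary, non-extremal $X$, not just for the boundary-anchored filling $\fmark$. Gauge edge modes at $X$ must also be treated charge-conjugation covariantly: the edge charges flip sign, so the edge-sector sum must be symmetric under $\cQ\to-\cQ$. I would first try to absorb $X$ into the marking and invoke (Q3) directly. Failing that, I would show that the $\mathsf C$ action on the edge center is a relabelling $\cQ\to-\cQ$ under which the neutral entropy, a sum over all sectors, is invariant.
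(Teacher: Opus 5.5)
Your Steps 1, 2, 4 and 5 match the paper's argument. The area term is common because the metric is common. The gauge-fixed bosonic and ghost operators obey $\mathcal K_{-n}=U_{\mathsf C}\mathcal K_nU_{\mathsf C}^{-1}$, so the regulated primed determinants and $\Gamma^{(1)}$ agree. The local FLM terms agree by the same regulator and counterterm argument. Pointwise equality in $X$ then gives equal stationary sets and equal minima.

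The genuine difference is Step 3. The paper uses no replica trick along $X$. It takes the common gauge-invariant algebra $\Alg(\Sigma_X)$ with the same edge prescription, which is exactly (Q3). It observes that $\mathsf C$ induces an automorphism $\alpha_{\mathsf C}$ relating the two restricted states; in a Hilbert-space representation this reads $\rho_{\Sigma_X,-n}=U\rho_{\Sigma_X,n}U^\dagger$. The spectra, and hence the algebraic entropies, therefore coincide. This handles every candidate $X$ at once, with no branched geometry, no conical regulator and no continuation in $q$.

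Your primary route instead requires (Q1)--(Q4) for an auxiliary bulk replica branched along an arbitrary non-extremal $X$. That is not literally among the corollary's hypotheses, which concern the fixed filling $\fmark$. In exchange, your route derives every term, including the integer-$q$ bulk R\'enyi entropies, from one uniform change-of-variables identity.

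Your fallback is essentially the paper's argument specialized to the center decomposition. However, the relabelling $\cQ\to-\cQ$ alone does not suffice. You also need $p_{\cQ}(n)=p_{-\cQ}(-n)$ and unitary equivalence of the blocks, $\rho_{\cQ}(n)\simeq\rho_{-\cQ}(-n)$. That is precisely what the automorphism statement supplies.
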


The determinant pairing and the treatment of the bulk algebra are detailed in
Appendix~\ref{sec:replicaapp}. If one later sums over modularly related
fillings, the equality survives term by term when the set of fillings and
their weights are themselves charge-conjugation invariant.

\subsection{Orientation resolution in the charged sector}
\label{sec:chargedsector}

Theorem~\ref{thm:replicablind} is sharp, and the same change of variables that
closes the neutral sector determines the charged one exactly. Let $\cQ_A$
denote the conserved matter charge of the entangling region, fixed on the
replica by the Chern-Simons Gauss law of Eq.~\eqref{eq:gauss}, and let $\mu$
be the conjugate Aharonov-Bohm angle inserted around the branch locus, so that
\begin{equation}
Z_{q,\fmark}[n;\mu]
=\int_{\mathfrak F_{q,\fmark}(n)}\!\mathcal D\mu_{q,\fmark}\,
e^{-I_E+i\mu\cQ_A}.
\label{eq:chargedreplica}
\end{equation}
Charged replica partition functions of this form are the standard generating
functionals of symmetry-resolved
entanglement~\cite{GoldsteinSela2018,BelinEtAl2013,ZhaoNortheMeyer2021}.

\begin{corollary}[Charged replica pairing and orientation resolution]
\label{cor:chargedblind}
Under the hypotheses of Theorem~\ref{thm:replicablind},
\begin{equation}
Z_{q,\fmark}[n;\mu]=Z_{q,\fmark}[-n;-\mu].
\label{eq:chargedpairing}
\end{equation}
Writing $\log Z_{q,\fmark}[n;\mu]=E_q[n;\mu]+O_q[n;\mu]$ with $E_q$ even and
$O_q$ odd in $\mu$, this is equivalent to
\begin{equation}
E_q[n;\mu]=E_q[-n;\mu],
\qquad
O_q[n;\mu]=-O_q[-n;\mu].
\label{eq:evenoddsplit}
\end{equation}
The symmetry-resolved partition functions
$\cZ_q(\cQ;n)=\int\!\frac{\dd\mu}{2\pi}\,e^{-i\mu\cQ}Z_{q,\fmark}[n;\mu]$ and
the entropies built from them satisfy
\begin{equation}
\cZ_q(\cQ;n)=\cZ_q(-\cQ;-n),
\qquad
S_q(\cQ;n)=S_q(-\cQ;-n),
\label{eq:symmetryresolved}
\end{equation}
and every odd charged cumulant, in particular
$\langle\cQ_A\rangle_{q,n}
=-i\,\partial_\mu\log Z_{q,\fmark}[n;\mu]\big|_{\mu=0}$, is exactly odd in
$n$.
\end{corollary}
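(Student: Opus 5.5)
The plan is to rerun the change of variables from the proof of Theorem~\ref{thm:replicablind}, this time keeping track of the single insertion that is not charge-conjugation even, namely $e^{i\mu\cQ_A}$. First I will establish that $\cQ_A$ is ${\sf C}$-odd. It is the integrated Noether charge of the current $J^\mu$ in Eq.~\eqref{eq:gaugeEOM}, which the proof of Theorem~\ref{thm:pairing} already shows reverses sign under ${\sf C}$. Equivalently, via the Gauss law~\eqref{eq:gauss}, $\cQ_A=\kappa\Phi_A$ with the flux odd because $F\mapsto-F$. Hence $\cQ_A[\phi^*,-A]=-\cQ_A[\phi,A]$ pointwise on field space, and so $e^{i\mu\cQ_A}\mapsto e^{-i\mu\cQ_A}$.

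Next I will apply the involution~\eqref{eq:Cinvolution} to Eq.~\eqref{eq:chargedreplica}, using the same four ingredients:
\begin{itemize}
\item (Q1) carries $\mathfrak F_{q,\fmark}(n)$ onto $\mathfrak F_{q,\fmark}(-n)$;
\item the Jacobian has unit modulus;
\item (Q2) keeps the measure, ghosts and regulator invariant;
\item $I_E$ is invariant.
\end{itemize}
The result is $Z_{q,\fmark}[n;\mu]=Z_{q,\fmark}[-n;-\mu]$, which is Eq.~\eqref{eq:chargedpairing}.

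The step I expect to be the main obstacle is justifying that the Aharonov--Bohm insertion around the branch locus is itself treated covariantly. The branch-locus holonomy $e^{i\mu\oint A}$ becomes $e^{-i\mu\oint A}$, so it must be regarded as the charge operator with its angle flipped, and not as a fixed, untransformed boundary source. Under (H3)/(Q1) only the even data are held fixed, so $\mu$ is by definition the source conjugate to an odd operator. I would state this explicitly, and note that (Q3) requires the edge-mode definition of $\cQ_A$ to be the same covariant one in both sectors.

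The remaining statements are algebra.
\begin{itemize}
\item \textbf{Even/odd split.} Take logs of Eq.~\eqref{eq:chargedpairing}, assuming a common branch of the logarithm fixed by continuity from $\mu=0$, where the two sides agree by Theorem~\ref{thm:replicablind}. This gives $E_q[n;\mu]+O_q[n;\mu]=E_q[-n;\mu]-O_q[-n;\mu]$. Replacing $\mu\to-\mu$ gives $E_q[n;\mu]-O_q[n;\mu]=E_q[-n;\mu]+O_q[-n;\mu]$. Adding and subtracting yields Eq.~\eqref{eq:evenoddsplit}, and the converse follows by recombining.
\item \textbf{Symmetry-resolved partition functions.} Substitute $\mu\to-\mu$ in the Fourier integral, which is over a full period so the measure and range are preserved: $\cZ_q(\cQ;n)=\int\frac{\dd\mu}{2\pi}e^{-i\mu\cQ}Z[-n;-\mu]=\int\frac{\dd\mu}{2\pi}e^{i\mu\cQ}Z[-n;\mu]=\cZ_q(-\cQ;-n)$.
\item \textbf{Resolved entropies.} Since $S_q(\cQ;n)$ is built from $\cZ_q(\cQ;n)$ and $\cZ_1(\cQ;n)$ by a fixed formula, it inherits the same relation. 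Where $q\to1$ is needed, (Q4) supplies the common continuation.
\item \textbf{Odd cumulants.} The $k$-th cumulant is $(-i\partial_\mu)^k\log Z$ evaluated at $\mu=0$. For odd $k$ only $O_q$ contributes, and by Eq.~\eqref{eq:evenoddsplit} it is odd in $n$. In particular $\langle\cQ_A\rangle_{q,n}=-\langle\cQ_A\rangle_{q,-n}$.
\end{itemize}
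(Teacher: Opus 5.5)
Your proposal is correct and follows the paper's proof essentially step for step: ${\sf C}$ reverses $\cQ_A$, so the change of variables of Theorem~\ref{thm:replicablind} carries $e^{i\mu\cQ_A}$ to $e^{-i\mu\cQ_A}$ and gives Eq.~\eqref{eq:chargedpairing}, and the even/odd split, the Fourier transform in $\mu$ and the odd-cumulant statement then follow by the same short algebra the paper uses. You are also more explicit than the paper on two points, namely the choice of branch for the logarithm and the requirement that $\mu$ be treated as a source conjugate to a ${\sf C}$-odd operator rather than as a fixed untransformed boundary datum; these fill in what the paper leaves implicit but do not change the argument.
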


\begin{proof}
Charge conjugation reverses the matter charge, $\cQ_A\to-\cQ_A$, and carries
the winding-$n$ cycle onto the winding-$-n$ cycle as in
Theorem~\ref{thm:replicablind}. The insertion $e^{i\mu\cQ_A}$ is therefore
carried to $e^{-i\mu\cQ_A}$, and the change of variables gives
Eq.~\eqref{eq:chargedpairing}. Separating $\log Z_{q,\fmark}$ into its parts
even and odd in $\mu$ and matching the two sides gives
Eq.~\eqref{eq:evenoddsplit}. Fourier transforming
Eq.~\eqref{eq:chargedpairing} in $\mu$ gives the first equality of
Eq.~\eqref{eq:symmetryresolved}, and the second follows because the
symmetry-resolved entropies are functionals of $\cZ_q(\cQ;n)$ alone. The
cumulant statement is the leading term of Eq.~\eqref{eq:evenoddsplit} in
$\mu$.
\end{proof}

Equations~\eqref{eq:renyiblind} and~\eqref{eq:symmetryresolved} together give
a complete accounting of where the orientation is stored. Summing
Eq.~\eqref{eq:symmetryresolved} over $\cQ$ reconstructs the total entropy,
which is blind by Eq.~\eqref{eq:renyiblind}, so the entire dependence on
$\operatorname{sgn}(n)$ resides in the charge asymmetry of the entanglement
spectrum and in no other feature of it. The leading value of that odd datum is
fixed by the topological content of the sector: for an entangling region whose
wedge contains the vortex core, the Gauss law gives
\begin{equation}
\langle\cQ_A\rangle_{1,n}=\kappa\Phi_n=\frac{2\pi\kappa n}{e},
\label{eq:oddcumulantvalue}
\end{equation}
which reverses with the winding and vanishes only in the trivial sector. The
linked line amplitude constructed in Sec.~\ref{sec:interferometer} realizes
the same odd datum at topological order. Equation~\eqref{eq:chargedpairing}
shows that the geometric blindness and the matter-sector readout are the even
and odd halves of a single identity rather than independent constructions.

\section{The purification gate: exact BTZ cross sections}
\label{sec:purificationgate}

The holonomy theorem is exact but abstract. The analytic BTZ cross sections
of Ref.~\cite{NguyenEtAl2018} turn it into a quantitative mixed-state
statement, and expose a distinction that is essential for the readout: the
value of the geometric purification cost, and the availability of a
cross-section-supported contour, are separate from the phase carried by the
matter line. We refer to the second of these throughout as the
\emph{purification gate}. Section~\ref{sec:erasure} gives it a restricted
code-theoretic interpretation as a geometric access threshold for the
specified cross-section-supported line protocol; no claim of complete quantum
information erasure is made.

\subsection{One-sided BTZ: an orientation-blind purification plateau}
\label{sec:onesidedpurification}

Consider a nonrotating one-sided BTZ geometry and complementary boundary
regions with half-widths $\alpha$ and $\pi-\alpha$. Let $r_c$ be a radial
cutoff and define
\begin{equation}
\alpha_c=\frac{L}{r_+}\arcsinh(1).
\label{eq:alphacpur}
\end{equation}
For the thermodynamically relevant large-black-hole branch the holographic
purification cost is~\cite{NguyenEtAl2018}
\begin{equation}
\EPH(\alpha;r_+)=\frac{L}{2G}
\begin{cases}
\displaystyle
\log\!\left[\frac{2r_c}{r_+}
\sinh\!\left(\frac{r_+\alpha}{L}\right)\right],
&0<\alpha<\alpha_c,\\[6pt]
\displaystyle
\log\!\left(\frac{2r_c}{r_+}\right),
&\alpha_c\leq\alpha\leq\pi-\alpha_c,\\[6pt]
\displaystyle
\log\!\left[\frac{2r_c}{r_+}
\sinh\!\left(\frac{r_+(\pi-\alpha)}{L}\right)\right],
&\pi-\alpha_c<\alpha<\pi.
\end{cases}
\label{eq:btzpurplateau}
\end{equation}
The three branches match continuously at $\alpha=\alpha_c$ and
$\alpha=\pi-\alpha_c$, since $\sinh(r_+\alpha_c/L)=1$ by
Eq.~\eqref{eq:alphacpur}. The middle branch is the entanglement-purification
plateau. It is determined by two radial geodesics and is independent of
$\alpha$.

The plateau exists only when the two matching points do not cross, that is
when $\alpha_c\leq\pi/2$. By Eq.~\eqref{eq:alphacpur} this is the single
inequality
\begin{equation}
\frac{r_+}{L}\ \geq\ \frac{2}{\pi}\arcsinh(1)
=\frac{2}{\pi}\log\left(1+\sqrt2\right)\simeq0.5611,
\label{eq:plateaucondition}
\end{equation}
which is the precise sense in which Eq.~\eqref{eq:btzpurplateau} refers to a
large black hole. The values $r_+/L=1,2,5$ used in
Fig.~\ref{fig:purificationgate} all satisfy it.

Since charge conjugation leaves $r_+$, $L$ and the cutoff identification
unchanged, Eq.~\eqref{eq:btzpurplateau} is exactly the same for $n$ and $-n$.
The plateau is therefore a strong geometric correlation feature that is
strictly blind to the sign of the anyonic hair. In a fixed marked linked
filling, the matter factor $M_{p,n}^{\nu_{\fmark}}$ is likewise independent
of $\alpha$ in the topological limit, but is inverted by $n\to -n$. The
plateau thus cleanly separates a real geometric cost from a quantized
topological phase.

\subsection{Two-sided BTZ: connectivity as the purification gate}
\label{sec:twosidedpurification}

The time-dependent two-sided BTZ example gives a sharper test. Let
$z_H=L^2/r_+$ and take $A$ and $B$ to be opposite half-circles on the two
boundaries at equal boundary time $T_0$. While the entanglement wedge is
connected, the holographic purification cost is~\cite{NguyenEtAl2018}
\begin{equation}
\EPH(T_0)=\frac{L}{4G}\arccosh\!\left\{
1+\left[\cosh\!\left(\frac{\pi L}{z_H}\right)-1\right]
\operatorname{sech}^{2}\!\left(\frac{T_0}{z_H}\right)
\right\},
\label{eq:twosidedEP}
\end{equation}
and
\begin{equation}
\frac{I(A{:}B;T_0)}{2}=\frac{L}{2G}
\log\!\left[
\sinh\!\left(\frac{\pi L}{2z_H}\right)
\operatorname{sech}\!\left(\frac{T_0}{z_H}\right)
\right].
\label{eq:twosidedMI}
\end{equation}
The connected saddle loses dominance at
\begin{equation}
T_*=z_H\arccosh\!\left[
\sinh\!\left(\frac{\pi L}{2z_H}\right)\right].
\label{eq:Tstar}
\end{equation}
Since $z_H=L^2/r_+$, reality of Eq.~\eqref{eq:Tstar} requires
$\sinh(\pi r_+/2L)\geq1$, which is again exactly the inequality
\eqref{eq:plateaucondition}. The same constant $\arcsinh(1)$ therefore
controls the existence of the one-sided plateau and the existence of the
two-sided transition.

As $T_0\to T_*^-$ the leading geometric mutual information approaches zero continuously,
whereas $\EPH$ remains finite and then jumps to zero when the wedge
disconnects. The height of that jump is fixed and universal. Using
$\cosh(\pi L/z_H)-1=2\sinh^2(\pi L/2z_H)$ together with
$\cosh(T_*/z_H)=\sinh(\pi L/2z_H)$ from Eq.~\eqref{eq:Tstar}, the brace in
Eq.~\eqref{eq:twosidedEP} equals $3$ identically, independently of $L$, $z_H$
and $G$, so that
\begin{equation}
\EPH(T_*^-)=\frac{L}{4G}\arccosh 3
=\frac{L}{2G}\arcsinh 1
=\frac{L}{2G}\log\left(1+\sqrt2\right)
=\frac{c}{3}\log\left(1+\sqrt2\right),
\label{eq:universaljump}
\end{equation}
where $c=3L/2G$ is the Brown-Henneaux central charge and we used
$\arccosh 3=2\arcsinh 1$. Unlike the one-sided plateau height in
Eq.~\eqref{eq:btzpurplateau}, which depends on the radial cutoff $r_c$,
Eq.~\eqref{eq:universaljump} is cutoff independent, because $A$ and $B$ live
on different boundaries and share no endpoints. The gate therefore opens and
closes at a scheme-independent height set by the central charge alone. Because
the mutual information vanishes continuously at the same point, the geometric
Markov gap $\Delta_M=2\EPH-I$ approaches an equally universal limit,
\begin{equation}
\Delta_M(T_*^-)=\frac{2c}{3}\log\left(1+\sqrt2\right)
=\frac{c}{3}\log\left(3+2\sqrt2\right),
\label{eq:universalmarkov}
\end{equation}
consistent with the general Markov gap bound of
Ref.~\cite{HaydenParrikarSorce2021}. A derivation of both universal numbers is
given in Appendix~\ref{sec:btzpurapp}.

This nonanalyticity supplies the operational gate for our protocol. For
$0\leq T_0<T_*$, a chosen cross section can be doubled into the closed probe
contour, and if the marked replica filling has linking number $\nu_{\fmark}$
its vortex-antivortex contrast is
\begin{equation}
\cX_p^{(\fmark)}(n;T_0)=\exp[-2i\nu_{\fmark}\kappa\Phi_p\Phi_n]
\left[1+O(e^{-d/\xi})\right].
\label{eq:timeindependentcontrast}
\end{equation}
The phase is independent of $T_0$ in the topological limit even though the
geometric purification cost changes continuously. At $T_0\geq T_*$ the
specific cross-section-supported channel ceases to exist. This does not set
the anyonic phase to zero; it closes the protocol that accesses it. Wedge
connectivity therefore controls availability, while charge conjugation
controls the sign of the matter holonomy.

The theorem of Sec.~\ref{sec:replicablind} fixes the first quantum correction
to this picture as well. When the classical RT surface for $A\cup B$ is the
disconnected union of the surfaces for $A$ and $B$, the leading nonvanishing
contribution to the mutual information is the bulk mutual information at order
$G^0$~\cite{FaulknerLewkowyczMaldacena2013},
\begin{equation}
I(A{:}B)=I_{\rm bulk}^{\rm alg}(A_b{:}B_b)+O(G),
\label{eq:flmresidualMI}
\end{equation}
and Theorem~\ref{thm:replicablind} gives the sector statement
\begin{equation}
I_{\rm bulk}^{\rm alg}(A_b{:}B_b;n)
=I_{\rm bulk}^{\rm alg}(A_b{:}B_b;-n).
\label{eq:flmresidualblind}
\end{equation}
Whatever neutral correlation survives the disconnection is therefore common to
the two orientations, so the gate acts on the geometric and the one-loop
channel alike. By Corollary~\ref{cor:chargedblind} the orientation is carried
instead by the charge-odd data of the same replica, of which the linked line
amplitude is the topological-limit representative.

\begin{figure}[t]
\centering
\includegraphics[width=0.49\textwidth]{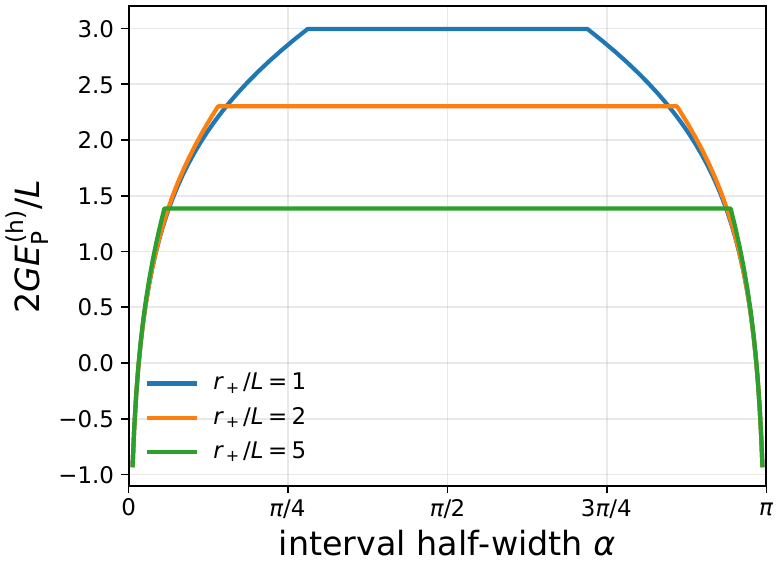}\hfill
\includegraphics[width=0.49\textwidth]{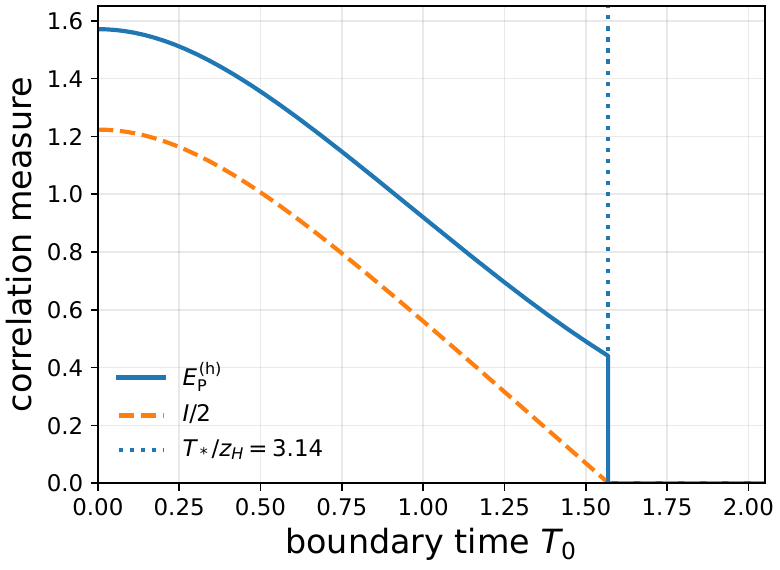}
\caption{Analytic BTZ purification benchmarks reconstructed from
Ref.~\cite{NguyenEtAl2018}. Left: the one-sided holographic purification cost
for complementary intervals, showing the horizon-supported plateau for
$r_+/L=1,2,5$ at $r_c/L=10$, all of which satisfy
Eq.~\eqref{eq:plateaucondition}. Right: the two-sided time-dependent $\EPH$
and $I/2$ for $L=1$, $z_H=1/2$ and $G=1$. The vertical line marks $T_*$, at
which the leading geometric $I$ vanishes continuously while $\EPH$ jumps from the universal value
of Eq.~\eqref{eq:universaljump} to zero. Every geometric curve is identical
for charge-conjugate vortex sectors. In the proposed linked protocol the
matter-sector contrast remains the quantized phase in
Eq.~\eqref{eq:timeindependentcontrast} throughout the connected phase, and
becomes unavailable when the wedge disconnects.}
\label{fig:purificationgate}
\end{figure}

\subsection{The gravitating exterior: a conditional linearized charge map}
\label{sec:exterior}

Theorems~\ref{thm:holsep} and~\ref{thm:replicablind} show exactly that the
neutral gravitational sector cannot resolve $\operatorname{sgn}(n)$. The
flat-space BPS profiles allow a separate, more quantitative question: if a
regular stationary asymptotically BTZ branch exists and is continuously
connected to a localized vortex core, what $O(G)$ asymptotic charge shifts
would that core induce? The calculation in this subsection answers only that
conditional perturbative question.

Because the vortex carries angular momentum, the appropriate circularly
symmetric ansatz is stationary rather than static,
\begin{equation}
\dd s^2=-e^{2\delta(r)}F(r)\,\dd t^2+\frac{\dd r^2}{F(r)}
+r^2\bigl(\dd\varphi+\omega(r)\dd t\bigr)^2 .
\label{eq:stationaryansatz}
\end{equation}
At linear order in $G$, $\omega=O(G)$ while its contribution to the diagonal
metric function is quadratic, so the mass constraint may be evaluated in the
nonrotating sector at this order. The momentum constraint separately fixes
the asymptotic $1/r^2$ coefficient of $\omega$ by the total angular momentum.
It is useful to keep this charge bookkeeping distinct from the physical ADM
normalization used in Eq.~\eqref{eq:BTZcharges}. We therefore define
\begin{equation}
\widehat M\equiv 8G M_{\rm ADM},\qquad
\widehat j\equiv \frac{8GJ_{\rm ADM}}{L},
\qquad
\frac{r_+\pm r_-}{L}=\sqrt{\widehat M\pm\widehat j}.
\label{eq:dimensionlessBTZcharges}
\end{equation}
Both $\widehat M$ and $\widehat j$ are dimensionless. This convention is used
only in Secs.~\ref{sec:exterior}--\ref{sec:band} and in
Appendix~\ref{sec:rotatingplateau}.

For the diagonal constraint, writing $T^t{}_t=-\rho$ gives at $O(G)$
\begin{equation}
F(r)=\frac{r^2}{L^2}-\widehat M_0
-16\pi G\int_0^r u\,\rho(u)\,\dd u+O(G^2),
\label{eq:Fintegrated}
\end{equation}
so the unresolved source outside radius $r$ is measured by the BPS energy-tail
fraction
\begin{equation}
\Delta_n(r)\equiv\frac{1}{|n|}\int_{mr}^\infty y\,
\frac{T_{00}(y)}{m^2v^2}\,\dd y .
\label{eq:tailfraction}
\end{equation}
By Eq.~\eqref{eq:exponentialfalloff}, $\Delta_n$ is exponentially small at
fixed $n$ and large $mr$. Numerically $\Delta_1=1.2\times10^{-11}$,
$\Delta_2=9.8\times10^{-10}$ and $\Delta_3=6.5\times10^{-8}$ at $mr=15$,
falling to $5.3\times10^{-16}$, $4.4\times10^{-14}$ and $2.9\times10^{-12}$
at $mr=20$. Panels (a) and (b) of Fig.~\ref{fig:exteriorband} display the
localization and the corresponding tail fraction.

\begin{proposition}[Conditional linearized charge matching]
\label{prop:exteriorimprint}
Assume that a regular stationary asymptotically BTZ Einstein-CSH branch exists,
that its $O(G)$ integrated charge shifts are controlled by the flat-space BPS
energy and angular momentum, and that $\Delta_n(r_+)\ll1$. Then
\begin{align}
\widehat M_n&=\widehat M_0+\gamma|n|+O(G^2),\nonumber\\
\widehat j_n&=\widehat j_0
-\operatorname{sgn}(\kappa)\frac{\gamma}{mL}\,n^2+O(G^2),
\qquad \gamma\equiv16\pi Gv^2 .
\label{eq:exteriorcharges}
\end{align}
For a nonrotating seed, $\widehat j_0=0$ and therefore
$|\widehat j_n|=\gamma n^2/(mL)$. The two induced shifts are even in $n$.
\end{proposition}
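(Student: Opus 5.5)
The plan is to read the asymptotic charges off the two $O(G)$ constraints of the stationary ansatz~\eqref{eq:stationaryansatz} and then replace the integrated sources by the flat-space BPS sum rules~\eqref{eq:sumrules}. For the mass, I start from Eq.~\eqref{eq:Fintegrated}. For $r$ outside the core, the asymptotic form $F\simeq r^2/L^2-\widehat M_n$ identifies
\begin{equation}
\widehat M_n=\widehat M_0+16\pi G\int_0^\infty u\,\rho(u)\,\dd u-16\pi G\int_r^\infty u\,\rho(u)\,\dd u .
\end{equation}
The last term is $16\pi G\,v^2|n|\,\Delta_n(r)$ by the definition~\eqref{eq:tailfraction}, after changing variables $y=mu$ and using $\rho=T_{00}$ at leading order. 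Evaluated at $r\geq r_+$, it is negligible by the hypothesis $\Delta_n(r_+)\ll1$. By the second sum rule in Eq.~\eqref{eq:sumrules}, the full integral equals
\begin{equation}
\int_0^\infty u\,T_{00}\,\dd u=\frac{v^2}{2\pi}\cdot 2\pi|n|=v^2|n|,
\end{equation}
which is simply $E_n/(2\pi)$ from Eq.~\eqref{eq:quantized}. This gives $\widehat M_n=\widehat M_0+\gamma|n|$. The $O(G^2)$ remainder collects the quadratic $\omega$ contribution, the metric corrections to $T_{00}$, and the horizon-region modification of the source. The second assumption of the proposition is exactly what lets me absorb these.

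For the angular momentum, I linearize the $t\varphi$ (momentum) constraint. At $O(G)$ it takes the form $\bigl(r^3\omega'\bigr)'\propto G\,r\,T_{0\varphi}$. Integrating outward, the coefficient of $1/r^2$ in $\omega$ is fixed by $\int_0^\infty r\,T_{0\varphi}\,\dd r$, again up to a tail controlled by $\Delta_n$. I then match this coefficient to the BTZ form $\omega\simeq -J/(\ldots r^2)$, equivalently to $\widehat j$ through Eq.~\eqref{eq:dimensionlessBTZcharges}. That identifies $\delta J_{\rm ADM}$ with the flat-space spin $J_n=-2\pi\operatorname{sgn}(\kappa)(v^2/m)n^2$, obtained from the third sum rule. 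Multiplying by $8G/L$ yields
\begin{equation}
\delta\widehat j=-\operatorname{sgn}(\kappa)\,\frac{16\pi G v^2}{mL}\,n^2 ,
\end{equation}
which is the stated shift.

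The nonrotating specialization and the evenness in $n$ then follow immediately. Both $|n|$ and $n^2$ are even, consistent with Eq.~\eqref{eq:Tparity}.

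The main obstacle is the normalization in the momentum constraint. Specifically, I need to fix the exact numerical factor relating the $1/r^2$ tail of $\omega$ to $J_{\rm ADM}$ in the convention of Eq.~\eqref{eq:BTZcharges}, and to relate $T_{0\varphi}$ in the curved ansatz to the flat-space density in Eq.~\eqref{eq:profiles}. I would settle both by checking against exact rotating BTZ, where $\omega=-r_+r_-/(Lr^2)$ and $\widehat j=2r_+r_-/L^2$. Once that is fixed, the flat-space sum rule $J_n$ can be inserted directly, and the $16\pi G$ factor comes out common to both shifts.
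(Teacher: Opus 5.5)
Your proposal is correct and follows essentially the paper's route. The paper simply writes $\delta\widehat M=8GE_n$ and $\delta\widehat j=8Gs_n/L$, inserts the closed-form Bogomolny values of Eqs.~\eqref{eq:quantized} and~\eqref{eq:spin}, and uses the sum rules only as a cross-check, whereas you reach the same two identifications through Eq.~\eqref{eq:Fintegrated} and the momentum constraint. The normalization you flag as the main obstacle is already supplied by the proposition's second hypothesis, which at $O(G)$ amounts to $\delta J_{\rm ADM}=s_n$. That is just as well, because comparing with vacuum BTZ fixes only the relation between the $1/r^2$ tail of $\omega$ and $J_{\rm ADM}$, not the coefficient of the source term in $(r^3\omega')'$.
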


\begin{proof}
At linear order the dimensionless ADM mass shift is
$\delta\widehat M=8G E_n$. Using Eq.~\eqref{eq:quantized} gives
$\delta\widehat M=16\pi Gv^2|n|=\gamma|n|$. Likewise the dimensionless
angular-momentum shift is $\delta\widehat j=8Gs_n/L$. Equations
\eqref{eq:spin} and~\eqref{eq:quantized}, together with
$m=2e^2v^2/|\kappa|$, give
$\delta\widehat j=-\operatorname{sgn}(\kappa)\gamma n^2/(mL)$. The same
$n$ and $n^2$ dependences are independently reproduced by the energy and spin
sum rules in Eq.~\eqref{eq:sumrules}. The tail fraction in
Eq.~\eqref{eq:tailfraction} measures the part of the flat-space source not
captured inside $r_+$; higher gravitational corrections begin at the next
order in the perturbative expansion.
\end{proof}

Proposition~\ref{prop:exteriorimprint} should not be read as a construction of
the missing black-hole branch. It is a charge-matching statement conditional
on that branch. Within those assumptions, the asymptotic $O(G)$ imprint is
fixed by $|n|$ and $n^2$, while residual finite-radius profile dependence is
controlled by the matter tail. The inverse mass scale $1/m$ fixes the relative
coefficient of the two BPS charge shifts; it should not be identified with the
full radius of a multiwinding vortex, whose half-mass radius grows
approximately as $2|n|/m$ in the numerical profiles.

For the exact orientation-blindness result, none of these perturbative
assumptions is needed: charge conjugation maps $n$ to $-n$ while leaving the
metric unchanged, so any admissible stationary branch has the same
$\widehat M$ and $\widehat j$ for the two orientations. The linearized charge
map simply makes explicit which even winding combinations occur in the BPS
core estimate.

\subsection{Conditional continuous winding envelope}
\label{sec:band}

For a nonrotating seed, Eq.~\eqref{eq:exteriorcharges} predicts a mass shift
linear in $|n|$ and an induced dimensionless spin quadratic in $n$. To explore
how those two shifts compete, we use the rotating-BTZ matching diagnostic
recorded in Appendix~\ref{sec:rotatingplateau},
\begin{equation}
\Xi(\widehat M,\widehat j)\equiv
\sinh\!\left(\frac{\pi}{2}\sqrt{\widehat M-|\widehat j|}\right)
\sinh\!\left(\frac{\pi}{2}\sqrt{\widehat M+|\widehat j|}\right)\geq1 .
\label{eq:rotatinggate}
\end{equation}
Its nonrotating limit is exact and gives
\begin{equation}
M_c\equiv\frac{4}{\pi^2}\log^2\!\left(1+\sqrt2\right)
=0.314833044295 .
\label{eq:Mc}
\end{equation}
Reality of the first square root is the BTZ nonextremality condition
$|\widehat j|\leq\widehat M$. For $\widehat j\neq0$, Eq.~\eqref{eq:rotatinggate}
is used here as a conditional continuation of the nonrotating cross-section
matching, not as an independently proved covariant-EWCS theorem.

With
\begin{equation}
u\equiv\gamma|n|,
\qquad
\lambda\equiv\gamma mL=16\pi Gv^2mL ,
\label{eq:banddimensionless}
\end{equation}
Eq.~\eqref{eq:rotatinggate} for a nonrotating seed becomes
\begin{equation}
\sinh\!\left(\frac{\pi}{2}\sqrt{\widehat M_0+u-\frac{u^2}{\lambda}}\right)
\sinh\!\left(\frac{\pi}{2}\sqrt{\widehat M_0+u+\frac{u^2}{\lambda}}\right)
\geq1 .
\label{eq:bandcondition}
\end{equation}
This inequality defines a \emph{continuous candidate envelope} in $u$. The
actual winding remains integer, $n\in\mathbb Z$, so for fixed $\gamma$ the
physical candidates are only those discrete values $u=\gamma|n|$ that lie in
the envelope and also satisfy the localization and branch-existence
conditions of Proposition~\ref{prop:exteriorimprint}.

For $\widehat M_0<M_c$ and large $\lambda$, the continuous lower edge obeys
\begin{equation}
u_-\longrightarrow M_c-\widehat M_0,
\qquad
|n|_{\rm lower}\longrightarrow\frac{M_c-\widehat M_0}{\gamma},
\label{eq:loweredge}
\end{equation}
while the nonextremality boundary gives the exact algebraic upper edge
\begin{equation}
|n|_{\rm ext}=\frac{mL}{2}
\left[1+\sqrt{1+\frac{4\widehat M_0}{\gamma mL}}\right],
\qquad
|n|_{\rm ext}\big|_{\widehat M_0=0}=mL .
\label{eq:nmax}
\end{equation}
For $\widehat M_0=0$, the numerical continuous edges satisfy
$u_-=0.3148331$ at $\lambda=1000$ and $u_+/\lambda=1.0000000$ already at
$\lambda=30$, reproducing these limits.

A second control condition is essential. The extremality edge in
Eq.~\eqref{eq:nmax} does not by itself guarantee that the BPS core is localized
behind the horizon. The numerical profiles give
$R_{1/2}\simeq2|n|/m$. At the massless-seed algebraic edge
$|n|=mL$, this becomes $R_{1/2}\simeq2L$, whereas extremal BTZ has
$r_+/L=\sqrt{\lambda/2}$. Thus, for example, the upper edge in the range
$\lambda\leq6$ displayed in Fig.~\ref{fig:exteriorband}(c) does not yet contain
even half of the BPS energy. The controlled subset of the candidate envelope
is therefore its intersection with a small-tail requirement such as
$\Delta_n(r_+)\ll1$; the figure intentionally does not promote the algebraic
upper edge to a constructed black-hole solution.

Maximizing the left side of Eq.~\eqref{eq:bandcondition} over continuous $u$
gives the envelope threshold
\begin{equation}
\lambda\geq\lambda_\star(\widehat M_0),
\qquad
\lambda_\star(0)=0.581179706 .
\label{eq:lambdastar}
\end{equation}
At $\widehat M_0=0$ the continuous threshold occurs at
$|\widehat j|/\widehat M=0.7534794$, corresponding to
$r_+/r_-=2.19976$. Representative continuous thresholds are
$\lambda_\star=0.334083$, $0.139111$ and $0.015128$ for
$\widehat M_0=0.10$, $0.20$ and $0.30$, and they tend to zero as
$\widehat M_0\to M_c$. Because $n$ is discrete, these values are necessary
thresholds for a nonempty continuous envelope, not sufficient thresholds for
an allowed integer winding at fixed $\gamma$.

\begin{figure}[t]
\centering
\includegraphics[width=0.99\textwidth]{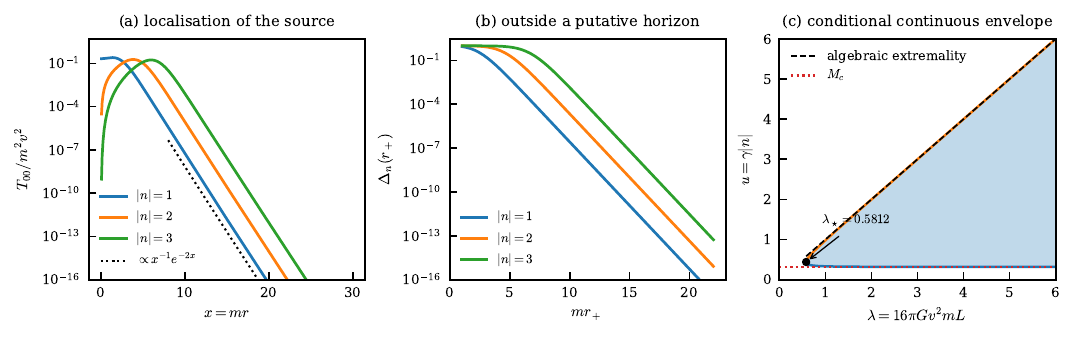}
\caption{Linearized BPS-core diagnostics for a conditional vortex-dressed BTZ
exterior. (a) Self-dual CSH energy density on a logarithmic scale, confirming
the $x^{-1}e^{-2x}$ falloff of Eq.~\eqref{eq:exponentialfalloff}. (b) Fraction
$\Delta_n(r_+)$ of the flat-space BPS energy outside a putative horizon,
Eq.~\eqref{eq:tailfraction}. (c) Continuous candidate envelope from
Eq.~\eqref{eq:bandcondition} at $\widehat M_0=0$ in the plane
$\lambda=16\pi Gv^2mL$ and $u=\gamma|n|$. The dashed edge is the algebraic
nonextremality boundary. Panel (c) does not impose $n\in\mathbb Z$ and does
not impose the small-tail condition $\Delta_n(r_+)\ll1$; those restrictions
must be applied before interpreting any point as a physical black-hole
candidate. The continuous envelope first appears at
$\lambda_\star=0.5812$.}
\label{fig:exteriorband}
\end{figure}

The status of the gravitating vortex branch on which this application rests
is stated in full in Appendix~\ref{sec:scope}. In particular, the flat-space
profiles do not satisfy the full stationary horizon boundary condition, and
no fully coupled positive-mass nonextremal Einstein-CSH branch is constructed
here. The charge-conjugation theorem of Sec.~\ref{sec:cthm} is
unconditional; Secs.~\ref{sec:exterior} and~\ref{sec:band} are explicitly
conditional perturbative diagnostics.

\section{Reading the orientation: the channel-selective interferometer}
\label{sec:interferometer}

\subsection{Reflected-replica geometry and matter-sector probe}
\label{sec:replica}

The holonomy theorem identifies what must be added to a gravitational
purification observable in order to resolve the orientation. The same cross
section $X$ computes $\EPH$ in the holographic purification proposal and one
half of the leading reflected entropy in the canonical purification. The
gravitational Chern-Simons connections determine the replica saddle and its
geometric contribution, but they are identical for $n$ and $-n$ by
Theorem~\ref{thm:holsep}. The distinguishing operator must therefore live in
the matter topological sector.

Let $\cV_n$ denote the closed background vortex-sector line in the fixed
marked Euclidean reflected-replica filling $\fmark$ of
Sec.~\ref{sec:euclidean}. The smooth Euclidean BTZ exterior is a solid torus,
so $\cV_n$ is not identified with a literal trajectory through the Lorentzian
point $r=0$; it is the core cycle of that solid torus. A probe line carrying
vortex sector $p$ is inserted along the EWCS in one replica of the canonical
purification, so that after the reflection gluing of
Ref.~\cite{DuttaFaulkner2021} it forms a single closed contour $\Gamma_p$ in
the reflected replica (Fig.~\ref{fig:linking}). We define the normalized
matter-sector amplitude
\begin{equation}
\cW_p^{(\fmark)}(n)=
\frac{Z_R^{(\fmark)}[\Gamma_p,\cV_n]\,Z_R^{(\fmark)}[0,0]}
{Z_R^{(\fmark)}[\Gamma_p,0]\,Z_R^{(\fmark)}[0,\cV_n]}.
\label{eq:normalized}
\end{equation}

Replica line insertions are the natural charged probes in Chern-Simons
entanglement calculations~\cite{BerthiereEtAl2021,ZhaoNortheMeyer2021}. At
fixed marked filling $\fmark$, including framing, boundary topological
sector, replica gluing and edge-sector data, this ratio removes the separate
probe and background self-energies, local core renormalizations and factors
associated with either line alone. The fixed boundary-sector qualification is
essential because Chern-Simons path integrals glue through boundary
edge-state Hilbert spaces~\cite{ElitzurEtAl1989,Witten1992Sewing}. In the
topological limit of a linked replica saddle, the remaining infrared datum is
the mutual monodromy of the two Abelian vortex sectors. The construction is
therefore channel selective: the gravitational Wilson algebra fixes the
common geometry, while the matter line supplies the orientation-sensitive
phase.

\begin{remark}[Well-definedness of the linking number]
\label{rem:linking}
A linking number of two closed curves in a solid torus is not defined by
homology alone, since the first homology group is nontrivial. Equation
\eqref{eq:markedlinking} is well defined because the marking $\fmark$ fixes
which cycle is contractible, and because $\Gamma_p$, being the reflection
double of a cross section anchored on RT surfaces, is null-homologous in the
complement of $\cV_n$ within the chosen filling. The signed intersection of
$\cV_n$ with any Seifert surface bounded by $\Gamma_p$ in that complement is
then an integer independent of the choice of surface, and equals
$\nu_{\fmark}$. Framing dependence is separately fixed by $\fmark$, as
required for Chern-Simons line
observables~\cite{Witten1989,ElitzurEtAl1989,Witten1992Sewing}.
\end{remark}

\begin{figure}[t]
\centering
\begin{tikzpicture}[scale=1.5,>=Latex]
\draw[thick] (0,0) circle (1.55);
\filldraw[red] (0,0) circle (0.065);
\node[red, font=\footnotesize] at (0.16,-0.22) {$\cV_n$};
\draw[very thick, blue] ({1.55*cos(15)},{1.55*sin(15)}) arc (15:105:1.55)
    node[pos=0.5, above, font=\footnotesize] {$A$};
\draw[very thick, blue] ({1.55*cos(-15)},{1.55*sin(-15)}) arc (-15:-105:1.55)
    node[pos=0.5, below, font=\footnotesize] {$B$};
\draw[thick, violet] ({1.55*cos(15)},{1.55*sin(15)})
    .. controls (0.7,0.35) and (0.7,-0.35) .. ({1.55*cos(-15)},{1.55*sin(-15)});
\draw[thick, violet] ({1.55*cos(105)},{1.55*sin(105)})
    .. controls (-0.7,0.35) and (-0.7,-0.35) .. ({1.55*cos(-105)},{1.55*sin(-105)});
\draw[very thick, orange] (-0.9,0)
    .. controls (-0.35,0.55) and (0.35,0.55) .. (0.9,0);
\node[orange, font=\footnotesize] at (0,0.66) {$\Sigma_{A{:}B}$};
\draw[very thick, orange, dashed] (0.9,0)
    .. controls (0.35,-0.55) and (-0.35,-0.55) .. (-0.9,0);
\node[font=\footnotesize] at (1.15,-0.44) {$\Gamma_p$};
\draw[->, thin] (1.00,-0.44) -- (0.62,-0.48);
\end{tikzpicture}
\caption{Reflected-replica geometry (schematic projection). Two disjoint
boundary intervals $A$ and $B$ (blue) share joint RT arcs (violet) that
delimit the entanglement wedge of $A\cup B$. The entanglement-wedge cross
section $\Sigma_{A{:}B}$ (orange, solid) is doubled by canonical purification
into a mirror copy (orange, dashed); their reflection gluing forms the closed
contour $\Gamma_p$. In the linked filling shown, $\Gamma_p$ encloses the
projection of the closed background vortex-sector line $\cV_n$ (red), so
$\Lk_{\fmark}(\Gamma_p,\cV_n)=1$. More generally the signed linking
$\nu_{\fmark}$ is topological data of the fixed marked filling, in the sense
of Remark~\ref{rem:linking}, and enters the topological-limit phase of
$\cW_p^{(\fmark)}(n)$.}
\label{fig:linking}
\end{figure}

\subsection{Mutual monodromy from intrinsic spin}
\label{sec:monodromy}

For Abelian anyons the mutual monodromy follows from the topological spins.
Fusion of vortex sectors adds flux, $\Phi_{p+n}=\Phi_p+\Phi_n$, so
Eq.~\eqref{eq:spin} gives
\begin{equation}
2\pi\left(s_{p+n}-s_p-s_n\right)
=-\frac{\kappa}{2}\left[(\Phi_p+\Phi_n)^2-\Phi_p^2-\Phi_n^2\right]
=-\kappa\Phi_p\Phi_n,
\label{eq:spindiff}
\end{equation}
and therefore
\begin{equation}
M_{p,n}
=\exp\left\{2\pi i\left(s_{p+n}-s_p-s_n\right)\right\}
=\exp\left(-i\kappa\Phi_p\Phi_n\right).
\label{eq:monodromy}
\end{equation}

For CSH vortices the naive electromagnetic Aharonov-Bohm argument, which
couples only the electric charge $Q_p$ of the probe to the magnetic flux
$\Phi_n$ of the background, returns a phase equal to the inverse of the
mutual statistics inferred from the vortex spin. Kim and
Lee~\cite{KimLee1994} traced the discrepancy to a Magnus interaction between
the vortex and the condensate current, and showed that combining it with the
electromagnetic Aharonov-Bohm term into a single dual electromagnetic
interaction reproduces the spin-statistics-consistent phase.
Equation~\eqref{eq:monodromy}, derived from the intrinsic vortex spin, is the
convention-independent way to state that result.

\subsection{Linking phase and vortex-antivortex contrast}
\label{sec:contrast}

If the signed linking number of $\Gamma_p$ and $\cV_n$ in the fixed marked
reflected replica is $\nu_{\fmark}=\Lk_{\fmark}(\Gamma_p,\cV_n)$, then at
separation $d$ large compared with the vortex width $\xi=m^{-1}$,
\begin{equation}
\cW_p^{(\fmark)}(n)=M_{p,n}^{\nu_{\fmark}}\left[1+O(e^{-d/\xi})\right],
\label{eq:W}
\end{equation}
so that the normalized combination in Eq.~\eqref{eq:normalized} retains only
the connected linking contribution. It is useful to keep the geometric and
topological outputs together without conflating them into a new entropy.
Define the purification-holonomy data
\begin{equation}
\cP_p^{(\fmark)}(n;A{:}B)\equiv
\left(\EPH(A{:}B),\,\cW_p^{(\fmark)}(n)\right).
\label{eq:purificationholonomypair}
\end{equation}
In a connected linked saddle and in the topological limit,
\begin{equation}
\cP_p^{(\fmark)}(\pm n;A{:}B)=
\left(\frac{E_W(A{:}B)}{4G},\,
M_{p,n}^{\pm\nu_{\fmark}}\left[1+O(e^{-d/\xi})\right]\right).
\label{eq:pairseparation}
\end{equation}
The first entry is the common minimal purification cost, whereas the second
entry is inverted by charge conjugation. This ordered pair is not proposed as
a replacement for $E_p$ or $S_R$; it records the minimal geometric
purification data together with the defect phase that those entropies
discard.

Comparing charge-conjugate sectors gives the profile-independent
vortex-antivortex contrast
\begin{equation}
\boxed{\begin{aligned}
\cX_p^{(\fmark)}(n)
&\equiv\frac{\cW_p^{(\fmark)}(n)}{\cW_p^{(\fmark)}(-n)}\\
&=\exp\left(-2i\nu_{\fmark}\kappa\Phi_p\Phi_n\right)
\left[1+O(e^{-d/\xi})\right].
\end{aligned}}
\label{eq:central}
\end{equation}
Equations \eqref{eq:WilsonAlgebraBlind}, \eqref{eq:purificationblind},
\eqref{eq:pairseparation} and \eqref{eq:central} are the central formulas of
the construction. They may be summarized as the sector-selective holonomy law
\begin{equation}
\boxed{\cG_{\rm grav}(n)=\cG_{\rm grav}(-n),
\qquad
M_{p,-n}=M_{p,n}^{-1}.}
\label{eq:holonomyseparation}
\end{equation}
The first equality is exact for the classical gravitational sector, and the
second is the exact Abelian matter monodromy. Geometry is therefore blind to
the sign of the anyonic hair, while a linked reflected phase is orientation
sensitive in the topological limit.

The contrast distinguishes the two orientations whenever
$2\nu_{\fmark}\kappa\Phi_p\Phi_n\notin2\pi\mathbb Z$. At exceptional probe
charges or levels for which the contrast is unity, the chosen interferometer
aliases the two sectors and another allowed probe may be required. When the
matter Chern-Simons level is quantized this condition is not merely a
caveat: Proposition~\ref{prop:aliasing} below shows that no admissible probe
resolves the two orientations if and only if the sector is self-conjugate,
in which case there is no doublet to resolve. The
linking class $\nu_{\fmark}$ is topological data of the chosen marked replica
filling. The derivation does not assume that a connected EWCS automatically
implies unit linking: a linked connected saddle has $\nu_{\fmark}\neq0$,
whereas an unlinked filling has $\nu_{\fmark}=0$ and no orientation signal.

The superscript $(\fmark)$ is essential. Equation~\eqref{eq:central} is an
exact topological-limit contrast within one marked saddle, with identical
edge and framing data in numerator and denominator. A modular transformation
that changes the contractible cycle can map the calculation to a different
bulk filling, and no sum over modular images is evaluated here.

Representative topological-limit phases $\arg\cW_p^{(\fmark)}(n)$,
$\arg\cW_p^{(\fmark)}(-n)$ and $\arg\cX_p^{(\fmark)}(n)$ are shown in
Fig.~\ref{fig:main}(c) for $\vartheta_{11}\equiv-\kappa(2\pi/e)^2=2\pi/5$ and
$\nu_{\fmark}=1$. The contrast phase is twice the single-orientation phase
modulo $2\pi$, as follows directly from Eq.~\eqref{eq:central}.

\subsection{What the contrast does and does not measure}
\label{sec:whatitmeasures}

Two qualifications delimit the claim, and we state them explicitly because
each is a natural objection.

First, ${\sf C}$ is a symmetry of the action~\eqref{eq:action}, not an
explicit breaking. Conjugating the probe together with the background leaves
the monodromy invariant,
\begin{equation}
M_{-p,-n}=M_{p,n},
\qquad\text{whereas}\qquad
M_{p,-n}=M_{p,n}^{-1}.
\label{eq:relativeorientation}
\end{equation}
Equation~\eqref{eq:central} therefore resolves the \emph{relative}
orientation of the probe sector $p$ and the background sector $n$, at a fixed
external probe convention. Which member of the doublet is labelled $n$ and
which $-n$ is a convention, fixed once and for all by the choice of probe.
This is the standard situation for a discrete symmetry that is not gauged:
the two members are distinct states of the same Hilbert space, separated by
any operator odd under ${\sf C}$, while the absolute labelling carries no
invariant meaning. The ordered pair
$\cP_p^{(\fmark)}$ of Eq.~\eqref{eq:purificationholonomypair} is accordingly
${\sf C}$-covariant rather than ${\sf C}$-invariant, with the first entry
inert and the second entry inverted. Section~\ref{sec:noglobal} shows that
this is not an accident of the model: an exact ${\sf C}$ acting only on the
gravitational sector would be a bulk global symmetry of the kind excluded in
Refs.~\cite{HarlowOoguri2019,HarlowOoguri2021}.

Second, Eq.~\eqref{eq:central} is not the only matter-sector observable
capable of distinguishing the two orientations, and is not claimed to be.
Since the Chern-Simons Gauss law gives $Q_n=\kappa\Phi_n$ in
Eq.~\eqref{eq:gauss}, a direct measurement of charge or flux is already odd
under $n\to-n$. The purpose of the reflected construction is more specific.
It isolates a normalized mutual-monodromy phase \emph{within} a geometric
purification protocol, after common gravitational dressing, one-line factors
and local core contributions have been removed, and it does so through the
same cross section that computes $\EPH$ and one half of $S_R$. The
contribution is therefore the embedding of a topological order parameter in
the entanglement-wedge construction, not the discovery of a
${\sf C}$-odd quantity.

\section{Quantum error correction and the necessity of the matter line}
\label{sec:qec}

Sections~\ref{sec:cthm} to~\ref{sec:interferometer} are self-contained. This
section recasts them in the operator-algebra language of holographic quantum
error correction, and shows that within that language the matter line is not
an optional refinement of the geometric protocol but a required object.

\subsection{When the doublet exists: an exact aliasing criterion}
\label{sec:aliascriterion}

Before asking what can read the orientation, we fix when there is an
orientation to read. The nonaliasing condition of Sec.~\ref{sec:contrast}
becomes sharp once the matter Chern-Simons level is quantized. Suppose the
unit-vorticity monodromy angle of Fig.~\ref{fig:main}(c) satisfies
\begin{equation}
\vartheta\equiv-\kappa\left(\frac{2\pi}{e}\right)^2=\frac{2\pi}{N},
\qquad N\in\mathbb Z_{>0},
\label{eq:levelquantization}
\end{equation}
which for $N=5$ is the value used in that figure. The vortex sectors then
close into the Abelian anyon theory $\mathbb Z_N$, the label $n$ is defined
modulo $N$, and Eq.~\eqref{eq:monodromy} becomes
\begin{equation}
M_{p,n}=\exp\left(\frac{2\pi i\,pn}{N}\right).
\label{eq:ZNmonodromy}
\end{equation}

\begin{proposition}[Aliasing occurs exactly for self-conjugate sectors]
\label{prop:aliasing}
Assume Eq.~\eqref{eq:levelquantization} and let $\nu_{\fmark}=\pm1$. The
following are equivalent.
\begin{itemize}
\item[\rm(i)] $\cX_p^{(\fmark)}(n)=1$ in the topological limit for every
probe sector $p$.
\item[\rm(ii)] $2n\equiv0 \bmod N$.
\item[\rm(iii)] The sector is self-conjugate, $n\equiv-n \bmod N$.
\end{itemize}
For general $\nu_{\fmark}$, condition {\rm(i)} holds if and only if
$2\nu_{\fmark}n\equiv0\bmod N$.
\end{proposition}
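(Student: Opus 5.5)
The plan is to reduce everything to the explicit topological-limit contrast of Eq.~\eqref{eq:central} and then evaluate it in the $\mathbb Z_N$ form of the monodromy. First I will rewrite the contrast using the quantization condition. By Eq.~\eqref{eq:quantized}, $\Phi_p\Phi_n=(2\pi/e)^2pn$, so Eq.~\eqref{eq:levelquantization} gives $-\kappa\Phi_p\Phi_n=2\pi pn/N$. Consistently with Eq.~\eqref{eq:ZNmonodromy}, the topological-limit contrast then becomes
\begin{equation*}
\cX_p^{(\fmark)}(n)=M_{p,n}^{2\nu_{\fmark}}=\exp\!\left(\frac{2\pi i\,(2\nu_{\fmark}n)\,p}{N}\right),
\end{equation*}
once the $O(e^{-d/\xi})$ corrections are dropped. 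Since $M_{p,n}$ depends on $n$ only modulo $N$, the statement is well posed on $\mathbb Z_N$ labels.

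Second, I will prove the general-$\nu_{\fmark}$ claim. Condition (i) says $\exp(2\pi i\,k p/N)=1$ for every probe sector $p$, where $k\equiv2\nu_{\fmark}n$. If $N\mid k$, this holds for every $p$. Conversely, the probe $p=1$ is an admissible sector, and it already forces $k/N\in\mathbb Z$. Hence (i) holds if and only if $2\nu_{\fmark}n\equiv0\bmod N$. This is simply the nondegeneracy of the $\mathbb Z_N$ braiding pairing $(p,k)\mapsto e^{2\pi ipk/N}$, and it is the only place where "for every probe" is used.

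Third, I will specialize to $\nu_{\fmark}=\pm1$. Since $\pm1$ is a unit modulo $N$, the condition $2\nu_{\fmark}n\equiv0$ is equivalent to $2n\equiv0\bmod N$, which gives (i)$\Leftrightarrow$(ii). Rewriting $2n\equiv0$ as $n\equiv-n\bmod N$ gives (ii)$\Leftrightarrow$(iii), because the charge-conjugate of sector $n$ is $-n$ under ${\sf C}$ of Eq.~\eqref{eq:C} in the $\mathbb Z_N$ fusion rules.

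The main obstacle is not the arithmetic but justifying that the range of admissible probes $p$ includes a generator, in particular $p=1$. It also requires that the labels, the monodromy and the contrast are all consistently reduced modulo $N$ in the fixed marked filling. I would argue this from the closure of the vortex sectors into the $\mathbb Z_N$ anyon theory asserted after Eq.~\eqref{eq:levelquantization}, with the unit-vorticity sector present by Eq.~\eqref{eq:quantized}. I would also note that the framing and edge data cancel between numerator and denominator of Eq.~\eqref{eq:central}, so no extra $n$-dependent phase spoils the reduction.
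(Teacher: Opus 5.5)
Your proof is correct and follows the paper's argument: you write the topological-limit contrast as $M_{p,n}^{2\nu_{\fmark}}=\exp(4\pi i\,\nu_{\fmark}pn/N)$, show it is trivial for every $p$ exactly when $2\nu_{\fmark}n\equiv0\bmod N$, and then use that $\nu_{\fmark}=\pm1$ is a unit modulo $N$ and that $2n\equiv0$ is the same as $n\equiv-n$. You also state explicitly that the converse needs only the generator $p=1$ to be an admissible probe, which the paper leaves implicit in its phrase ``for every integer $p$''.
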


\begin{proof}
By Eqs.~\eqref{eq:central} and~\eqref{eq:ZNmonodromy},
\begin{equation}
\cX_p^{(\fmark)}(n)=M_{p,n}^{2\nu_{\fmark}}
=\exp\left(\frac{4\pi i\,\nu_{\fmark}pn}{N}\right).
\label{eq:ZNcontrast}
\end{equation}
This equals unity for every integer $p$ if and only if
$2\nu_{\fmark}n\equiv0\bmod N$, which for $\nu_{\fmark}=\pm1$ is
$2n\equiv0\bmod N$, and that is the statement $n\equiv-n\bmod N$.
\end{proof}

The criterion is self-consistent in a way that the earlier caveat could not
express. The interferometer fails to separate the two orientations precisely
when the two orientations are the same anyon sector, in which case there is
no doublet and nothing has been missed. Whenever the sector is not
self-conjugate, some probe resolves it. This is the charge-conjugation
instance of nondegeneracy of the braiding form: in a modular Abelian theory
the only transparent sector is the vacuum~\cite{Kitaev2006}, and the relevant
transparency condition here is that of the fused sector $2n$. All statements
below assume a sector that is not self-conjugate.

\subsection{The doublet as an exact bulk logical qubit}
\label{sec:logicalqubit}

In the error-correcting formulation of AdS/CFT, bulk effective field theory
operators are logical operators on a code subspace of the boundary Hilbert
space, and a bulk operator is representable on a boundary region $R$ when it
lies in the entanglement wedge of
$R$~\cite{AlmheiriDongHarlow2015,JafferisEtAl2016,DongHarlowWall2016,JahnEisert2021}.
The relevant technical statement is the theorem of operator-algebra quantum
error correction~\cite{BenyKempfKribs2007,AlmheiriDongHarlow2015}: an
operator $O$ acting within a code subspace $\cC$ is representable on the
complement of an erased region $E$ if and only if
$\langle\tilde i|[O,X_E]|\tilde j\rangle=0$ for every $X_E$ supported on $E$
and every pair of code states. Correctability is a property of a subalgebra,
not of the full algebra of operators on $\cC$.

The charge-conjugate doublet supplies an unusually clean instance. Fix an
admissible branch of winding $n$ in the sense of
Definition~\ref{def:admissible}, let $|\Psi_{\pm n}\rangle$ denote the two
members of the pair produced by Theorem~\ref{thm:pairing}, and define the
two-dimensional code subspace
\begin{equation}
\cC_n\equiv\mathrm{span}\left\{|\Psi_{n}\rangle,\,|\Psi_{-n}\rangle\right\}.
\label{eq:codesubspace}
\end{equation}

\begin{proposition}[Gravitational blindness as a code property]
\label{prop:logical}
Let $\Alg_{\rm grav}$ denote the algebra generated by the common
gravitational data $\mathfrak D_{\rm grav}$ of
Eq.~\eqref{eq:gravdataexhaustion} together with all gravitational Wilson
amplitudes carrying matched dressing data in the sense of
Definition~\ref{def:dressing}. Let the sector $n$ be non-self-conjugate in
the sense of Proposition~\ref{prop:aliasing}. Then at leading semiclassical
order every $G\in\Alg_{\rm grav}$ obeys
\begin{equation}
G\big|_{\cC_n}=g\,\mathbf 1_{\cC_n},
\qquad g\in\mathbb C.
\label{eq:logicalqubit}
\end{equation}
\end{proposition}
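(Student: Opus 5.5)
To prove Proposition~\ref{prop:logical}, I would compute the $2\times2$ matrix of an arbitrary $G\in\Alg_{\rm grav}$ in the basis $\{|\Psi_n\rangle,|\Psi_{-n}\rangle\}$. I would then show two things: the diagonal entries coincide, and the off-diagonal entries vanish at leading semiclassical order. I first argue this for generators and then extend to the algebra they generate.

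\textbf{Diagonal entries.} At leading order the expectation value of a generator in $|\Psi_{\pm n}\rangle$ is its classical value on the corresponding saddle. For the fields $g_{\mu\nu}$, $\cA^{(\pm)}$ and $\mathcal L_\pm$, Eq.~\eqref{eq:gravdataexhaustion} gives $\mathfrak D_{\rm grav}(n)=\mathfrak D_{\rm grav}(-n)$. For matched-dressing Wilson amplitudes, Theorem~\ref{thm:holsep} gives $\cG_{\rm grav}[n]=\cG_{\rm grav}[-n]$. I would phrase this operator-theoretically. Let ${\sf C}$ be the unitary (or antiunitary) implementing charge conjugation on the code states, with ${\sf C}|\Psi_n\rangle=|\Psi_{-n}\rangle$ by Theorem~\ref{thm:pairing}. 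Every generator is built from metric and triad data that ${\sf C}$ leaves pointwise invariant, so ${\sf C}G{\sf C}^{-1}=G$. Hence $\langle\Psi_{-n}|G|\Psi_{-n}\rangle=\langle\Psi_n|G|\Psi_n\rangle\equiv g$. This commutation is what allows the statement to pass from generators to products and sums: the property ``commutes with ${\sf C}$'' is closed under the algebra operations, whereas the property ``has equal classical values'' is not obviously so.

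\textbf{Off-diagonal entries.} I need $\langle\Psi_{n}|G|\Psi_{-n}\rangle=0$. The tool is superselection by the matter topological sector. The two states lie in distinct anyon sectors $n\not\equiv -n \bmod N$, by the non-self-conjugacy hypothesis and Proposition~\ref{prop:aliasing}. The total matter charge $Q=\kappa\Phi$ of Eq.~\eqref{eq:gauss} therefore takes different values on them, $Q_{\pm n}=\pm\kappa\Phi_n$. Every element of $\Alg_{\rm grav}$ is neutral: it is built from gravitational fields and gravitational Wilson lines that carry no $A_\mu$ charge, so it commutes with $Q$ (equivalently, with the linked matter line or $\mathbb Z_N$ monodromy operator). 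Sandwiching $[G,Q]=0$ between the two sectors gives $(Q_n-Q_{-n})\langle\Psi_n|G|\Psi_{-n}\rangle=0$, which forces the off-diagonal entries to vanish. Combined with the diagonal step, this yields $G|_{\cC_n}=g\,\mathbf 1_{\cC_n}$.

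\textbf{Main obstacle.} The hardest step is justifying $[G,Q]=0$ for the gravitationally dressed operators and, relatedly, the normalization $\langle\Psi_n|\Psi_{-n}\rangle=0$. Boundary-anchored Wilson lines carry edge-sector data, and the entanglement-wedge algebra could in principle include matter edge modes. I would handle this through the fixed-filling and matched-dressing hypotheses. Dressing data and edge sectors are held identical and ${\sf C}$-invariant, as in condition (Q3), so no charged edge operator enters $\Alg_{\rm grav}$. Orthogonality then follows because the two states are distinct eigenstates of the self-adjoint $Q$ for $n\not\equiv -n$; were the sector self-conjugate, the doublet would collapse, consistent with Proposition~\ref{prop:aliasing}. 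The qualifier ``at leading semiclassical order'' absorbs the identification of expectation values with saddle values in the diagonal step. As a remark, Theorem~\ref{thm:replicablind} extends the diagonal equality to the neutral replica observables beyond leading order.
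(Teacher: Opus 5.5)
Your proposal is correct and follows the paper's own proof: the diagonal entries are equal because charge conjugation leaves the gravitational data and all matched-dressing Wilson amplitudes unchanged (Theorem~\ref{thm:holsep}, Corollaries~\ref{cor:banados}--\ref{cor:geom}), and the off-diagonal entries vanish because the vortex label is superselected while every element of $\Alg_{\rm grav}$ is neutral under the matter gauge group. Writing the diagonal step as ${\sf C}G{\sf C}^{-1}=G$ is a tidy way to pass from generators to the whole algebra, but you should commit to a unitary ${\sf C}$, since an antiunitary one only gives $\langle\Psi_{-n}|G|\Psi_{-n}\rangle=\overline{\langle\Psi_n|G|\Psi_n\rangle}$ for non-Hermitian $G$; note also that your $Q$-eigenvalue step needs only $n\neq0$, whereas the paper detects the sector through the $\mathbb Z_N$ monodromy, for which non-self-conjugacy is exactly the input supplied by Proposition~\ref{prop:aliasing}.
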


\begin{proof}
The diagonal matrix elements coincide by Theorem~\ref{thm:holsep} and
Corollaries~\ref{cor:banados} to~\ref{cor:geom}, which give $G[n]=G[-n]$ for
every such observable. For the off-diagonal elements, note that the vortex
label is a superselection charge. By the Chern-Simons Gauss law of
Eq.~\eqref{eq:gauss} the sector is detected at the boundary by braiding with
a probe line, through the monodromy of Eq.~\eqref{eq:ZNmonodromy}, and by
Proposition~\ref{prop:aliasing} the two sectors are distinct. Every element
of $\Alg_{\rm grav}$ is built from the metric and the connections
$\cA^{(\pm)}$ of Eq.~\eqref{eq:gravconnections}, which are neutral under the
matter gauge group, so no such element carries vortex charge and
$\langle\Psi_{n}|G|\Psi_{-n}\rangle=0$. Hence $G$ restricted to $\cC_n$ is
diagonal with equal entries, which is Eq.~\eqref{eq:logicalqubit}.
\end{proof}

Equation~\eqref{eq:logicalqubit} states that $\cC_n$ is a one-qubit code on
which the entire classical gravitational algebra acts trivially. The relative
orientation of Sec.~\ref{sec:whatitmeasures} is a logical degree of freedom,
and Eq.~\eqref{eq:central} exhibits a logical operator for it. The structure
has a precedent in holographic quantum error correction and tensor-network
codes~\cite{AlmheiriDongHarlow2015,PastawskiEtAl2015}. Theorem~\ref{thm:replicablind}
adds that every neutral fixed-filling replica observable has equal diagonal
matrix elements on the two code states; with the same superselection argument
used in Proposition~\ref{prop:logical}, the corresponding neutral replica
algebra also acts proportionally to the identity.

\begin{remark}[Exact for the neutral algebra, resolved in the charged one]
\label{rem:nonpert}
Proposition~\ref{prop:logical} is an operator statement for the leading
classical gravitational algebra, and Theorem~\ref{thm:replicablind} is a
fixed-filling path-integral statement for the larger charge-conjugation-even
replica algebra. Both are exact with respect to those neutral observables
under the stated assumptions. The complementary statement is equally exact:
by Corollary~\ref{cor:chargedblind} the charge-odd sector separates the two
code states already at first order in the conjugate angle, with the universal
value of Eq.~\eqref{eq:oddcumulantvalue}. Charged matter lines and boundary
defect operators therefore lie outside $\Alg_{\rm grav}$ by design rather than
by omission, which is precisely what makes the pair a code with a readable
logical operator rather than an inaccessible degeneracy.
\end{remark}

\subsection{No bulk global symmetries and the necessity of the matter line}
\label{sec:noglobal}

The gravitational blindness proved in Sec.~\ref{sec:cthm} is, in a precise
sense, the diagnostic signature of a bulk global symmetry. Harlow and Ooguri
define a global symmetry with group $G$ in asymptotically AdS quantum gravity
by requiring, among other conditions, that for every nontrivial normal
subgroup $H\subset G$ there exist two gravitationally dressed operators that
transform in the same representation of the asymptotic conformal group but in
different representations of $H$~\cite{HarlowOoguri2021}.
Theorem~\ref{thm:holsep} and Corollary~\ref{cor:banados} realize this
condition explicitly for the group generated by ${\sf C}$: the two members of
the doublet carry identical gravitational dressing and identical Ba\~nados
functions, hence identical asymptotic conformal data by
Eq.~\eqref{eq:boundaryblind}, while the matter monodromy is inverted by
Eq.~\eqref{eq:holonomyseparation}.

Harlow and Ooguri prove that no such global symmetry can exist in a quantum
gravity theory dual to a boundary conformal field
theory~\cite{HarlowOoguri2019,HarlowOoguri2021}. Their mechanism is the
present theorem run in the opposite direction. Decomposing the boundary into
regions and using entanglement wedge
reconstruction~\cite{DongHarlowWall2016,JafferisEtAl2016}, the symmetry
charge is expressible through operators whose entanglement wedges reach only
the gravitational dressing of the charged object; if two objects have
identical dressing they cannot then transform differently. Our
Theorem~\ref{thm:holsep} establishes exactly the hypothesis on which that
argument turns, namely that the dressing is identical, and
Theorem~\ref{thm:replicablind} strengthens it, since the agreement holds not
only classically but for every neutral fixed-filling replica observable.

The conclusion is not that the doublet is inconsistent. It is that the
distinguishing observable cannot lie in the neutral gravitational sector, and
that the theory must supply it elsewhere. The present model does so
explicitly, and Corollary~\ref{cor:chargedblind} says where. The matter
Chern-Simons field of Eq.~\eqref{eq:action} is a long-range bulk gauge
symmetry, so an operator creating a gauge-charged object in the interior is
attached to the boundary by a line that enters the entanglement wedge of some
boundary region; that line is $\cV_n$ of Sec.~\ref{sec:replica}. Its entropic
counterpart is the charge-odd part of the replica partition function, which by
Eq.~\eqref{eq:evenoddsplit} is exactly odd in the winding and by
Eq.~\eqref{eq:oddcumulantvalue} is nonzero in every nontrivial sector. The
interferometer of Eq.~\eqref{eq:central} is therefore not an optional addition
to the geometric protocol. It is the topological-limit representative of an
observable that the structure of the theory already contains.

Two further correspondences are worth recording. First, the completeness of
gauge representations, conjectured
earlier~\cite{Polchinski2004,BanksSeiberg2011} and established within
AdS/CFT by Harlow and Ooguri, asserts that dynamical objects exist in every
irreducible representation of a long-range bulk gauge group. In the Abelian
Chern-Simons setting this says that every probe sector $p$ in
Eq.~\eqref{eq:ZNcontrast} is populated, and Proposition~\ref{prop:aliasing}
then guarantees a distinguishing probe whenever the doublet is genuine.
Completeness and nonaliasing are the same requirement, stated once in the
language of quantum gravity and once in the language of anyon braiding.

Second, the argument establishing completeness uses a Wilson line anchored on
the two asymptotic boundaries of the maximally extended two-sided black hole
and threading the wormhole between
them~\cite{HarlowOoguri2021}. The two-sided protocol of
Sec.~\ref{sec:twosidedpurification} places $A$ and $B$ on opposite
boundaries and supports the probe contour on the cross section of a
wormhole-spanning entanglement wedge, so the geometry in which completeness
is established is the geometry in which our gate operates. We do not claim an
identification of the two line operators, only that the two constructions
live in the same background for the same reason: a connected wedge is what
allows a charged line to be anchored at the boundary at all.

\subsection{Wedge connectivity as a geometric access threshold}
\label{sec:erasure}

The disconnection transition of Sec.~\ref{sec:twosidedpurification} has a
natural reading in the same framework, and it answers a question posed in the
original error-correction analysis. Almheiri, Dong and Harlow consider an
AdS$_3$ boundary region consisting of two disjoint intervals, note that the
minimal surface jumps discontinuously as the intervals grow, and observe that
a bulk operator at the centre is reconstructible on that region if and only
if the induced division of the boundary constitutes a quantum secret sharing
scheme~\cite{AlmheiriDongHarlow2015}. They were not able to settle the
question for a local bulk field.

Section~\ref{sec:purificationgate} supplies a solvable version of the same
configuration in which the encoded object is topological rather than local.
For $0\leq T_0<T_*$ the entanglement wedge of $A\cup B$ is connected, the
doubled cross section $\Gamma_p$ of Fig.~\ref{fig:linking} exists with
$\nu_{\fmark}\neq0$, and the quantized phase of
Eq.~\eqref{eq:timeindependentcontrast} is available. At $T_0\geq T_*$ the
wedge disconnects and the cross-section-supported contour ceases to exist, as
recorded in Eq.~\eqref{eq:physicalEPpiecewise}. Wedge connectivity is
therefore the threshold separating availability from nonavailability of this
specific cross-section-supported logical readout. In that restricted sense,
$T_*$ of Eq.~\eqref{eq:Tstar} is a geometric erasure threshold for the line
protocol. In holographic tensor-network codes the analogous statement is that
a bulk logical degree of freedom placed deep in the network is recoverable
only from boundary regions whose wedge reaches
it~\cite{PastawskiEtAl2015,JahnEisert2021}.

The scope of this reading is worth stating once. What
Sec.~\ref{sec:twosidedpurification} establishes is the availability of one
specific cross-section-supported protocol, a geometric fact following from
Eqs.~\eqref{eq:Tstar} and~\eqref{eq:physicalEPpiecewise}, which the
error-correction framework then interprets. Closing the channel removes this
observer's access to the phase rather than setting the phase to zero, and by
Corollary~\ref{cor:chargedblind} the orientation remains encoded in the
charge-odd sector of the same replica even where the leading geometric channel
has closed.

With this reading, the universal jump height of Eq.~\eqref{eq:universaljump}
acquires a code-theoretic gloss. The quantity
\begin{equation}
\EPH(T_*^-)=\frac{c}{3}\log\left(1+\sqrt2\right)
\label{eq:jumpcode}
\end{equation}
is the geometric purification cost at the point where the specified line
access closes. It is
independent of the cutoff, the temperature and Newton's constant, and is
fixed by the central charge alone, yet by Proposition~\ref{prop:logical} it
carries no information whatever about the logical state it gates.

\section{Discussion and conclusion}
\label{sec:discussion}

The two Chern-Simons structures of the problem encode complementary
information. The $\mathrm{SL}(2,\mathbb R)_+\times\mathrm{SL}(2,\mathbb R)_-$
gravitational connections encode the geometry, the BTZ conjugacy classes and
the Brown-Henneaux charges. The Abelian CSH sector encodes charge-flux
attachment, fractional spin and mutual braiding. Charge conjugation leaves
the gravitational connections unchanged and reverses the matter sector. This
yields a stronger result than metric blindness alone: all closed
gravitational holonomy characters, and all open or network Wilson amplitudes
evaluated on the same contours with identical representation and dressing
data, are exactly even under $n\to-n$.

The charge-conjugate pair therefore forms a gravitationally degenerate
topological doublet. Its members have the same metric, the same BTZ holonomy
classes, the same mass and angular momentum, the same semiclassical Virasoro
data and the same leading geometric entanglement observables. They are not
the same physical state. A matter topological line sees inverse mutual
monodromy, and the normalized reflected ratio converts this difference into
the phase
\begin{equation}
\cX_p^{(\fmark)}(n)=\exp\left[-2i\nu_{\fmark}\kappa\Phi_p\Phi_n\right],
\label{eq:discussioncontrast}
\end{equation}
up to exponentially small finite-core corrections. The result is operational
rather than merely classificatory: it specifies the entire class of probes
that must fail, together with a line operator that succeeds within the same
purification geometry, in the precise sense delimited in
Sec.~\ref{sec:whatitmeasures}.

This formulation also clarifies the boundary interpretation. Equality of the
Brown-Henneaux charges and of the stress-tensor one-point data means that the
charge-conjugate sectors are indistinguishable within the leading
semiclassical Virasoro channel. Their distinction resides in a
topological-defect sector that is not generated by the stress tensor. The
reflected interferometer is therefore a defect-sensitive completion of
geometric entanglement, not a correction to the geodesic length itself. In
this sense the missing hair is not hidden in a more accurate metric
reconstruction. It lies outside the neutral gravitational-replica observable
algebra.

The fixed-filling replica theorem makes this separation robust across the
first quantum correction and, at the level of the change of variables, across
the complete uncharged replica path integral. The paired saddles have
isospectral gauge-fixed fluctuation operators, their FLM bulk entropies and
local renormalization terms agree, and the generalized-entropy functionals
coincide pointwise on the common geometry, so quantum extremization cannot
select different surfaces for $n$ and $-n$.

Corollary~\ref{cor:chargedblind} completes the accounting. The same change of
variables that closes the neutral sector opens the charged one exactly. The
charge-odd part of the replica free energy is odd in the winding, the
symmetry-resolved entanglement spectrum obeys $\cZ_q(\cQ;n)=\cZ_q(-\cQ;-n)$,
and the first charged cumulant reverses with the winding with the topological
value of Eq.~\eqref{eq:oddcumulantvalue}. Summing over charge sectors
reconstructs the blind total entropy, so the orientation is stored in the
charge asymmetry of the entanglement spectrum and nowhere else. The linked
line amplitude of Eq.~\eqref{eq:discussioncontrast} is the topological-limit
representative of that same odd datum, which places the geometric no-go and
the matter-sector readout inside one identity.

The entanglement-of-purification perspective sharpens this statement. Even a
minimization over geometric purifications, together with the reflected
entropy, the mutual information and their known inequality structure, returns
only the common first entry of Eq.~\eqref{eq:pairseparation}. The exact
one-sided BTZ plateau and the two-sided disconnection transition show that
this geometric entry can have rich and nonanalytic behaviour without
acquiring any orientation sensitivity. Equation~\eqref{eq:universaljump}
makes the point quantitative: the leading geometric gate closes with a jump
whose height is fixed by the central charge alone and is independent of the
cutoff, the temperature and Newton's constant, yet that number carries no
information about the vortex sector. Any order-$G^0$ mutual information
remaining after the classical disconnection is charge-conjugation even, and is
therefore orientation blind as well. Conversely, the matter phase can remain perfectly
quantized while the cross-section length evolves. The proposed readout is
therefore a modulus-phase separation in operational terms: purification
geometry sets the correlation cost and opens or closes the channel, while
matter holonomy carries the sign information.

The linearized BPS-core calculation adds a quantitative but conditional
complement to the exact symmetry theorem. If a regular stationary localized
branch exists and its $O(G)$ charge shifts are controlled by the flat-space
BPS data, Proposition~\ref{prop:exteriorimprint} gives
$\delta\widehat M=\gamma|n|$ and
$|\delta\widehat j|=\gamma n^2/(mL)$ for a nonrotating seed. These shifts
are again even in $n$, as they must be. They should be interpreted as
asymptotic charge matching, not as a construction of the missing black-hole
solution.

Within the additional rotating-BTZ matching assumption of
Sec.~\ref{sec:band}, the two charge scalings generate a finite continuous
candidate envelope in $|n|$. Its algebraic massless-seed upper edge is
$|n|=mL$, while its continuous lower edge approaches $M_c/\gamma$ at large
$\lambda$. The physical candidate set is smaller: $n$ is discrete and the
BPS source must be localized behind the horizon. Indeed, the numerical
$R_{1/2}\simeq2|n|/m$ scaling shows that the algebraic upper edge is not
localized in the plotted range $\lambda\leq6$. The robust conclusion is
therefore the existence of a computable conditional envelope, not a theorem
that every point in that envelope is realized by a gravitating vortex. The
cutoff-independent jump and Markov-gap values derived earlier remain exact
statements of the nonrotating BTZ benchmark; their numerical values are not
extended here to the induced rotating case.

The error-correcting reading of Sec.~\ref{sec:qec} sharpens the claim once
more. The doublet is a one-qubit code for the neutral gravitational-replica
algebra, whose logical information no classical gravitational observable can
read, and whose distinguishing logical operator is a
topological line rather than a local bulk field. That such an operator must
exist is not a modelling choice. A charge-conjugation symmetry acting
faithfully on matter while acting trivially on the entire gravitational
sector is, by Theorem~\ref{thm:holsep}, precisely the configuration that the
no-global-symmetries theorem excludes as an exact bulk global
symmetry~\cite{HarlowOoguri2019,HarlowOoguri2021}. The resolution offered
there, that the symmetry be a long-range gauge symmetry whose charged objects
carry boundary-anchored lines, is realized here explicitly and computably.
The completeness of gauge charges required by the same analysis is, in this
model, the statement that a distinguishing probe exists whenever the two
orientations are genuinely distinct sectors, which is
Proposition~\ref{prop:aliasing}.

The code-theoretic reading transfers directly to finite-dimensional
holographic models, where bulk logical operators and their recovery regions
are realized explicitly~\cite{PastawskiEtAl2015,JahnEisert2021} and where the
FLM split between an area term and a bulk entropy has been measured
directly~\cite{BiswasEtAl2026}. Our results predict a sharp signature for such
models. Neutral entropy channels remain exactly degenerate across a
charge-conjugate logical pair, so the measured area term is rigid under
reversal of the logical label, while the charge-resolved spectrum separates
the two members at first order in the conjugate angle.

Three broader implications follow. First, the theorem exhausts the classical
Einstein-gravity channels available under the stated hypotheses: local
curvature, global parallel transport and Brown-Henneaux boundary data are all
identical. The conclusion follows from equality of the full geometric fields
and does not rely on treating closed holonomies as a complete coordinate
system in the sourced core. Second, the modulus-phase decomposition
generalizes to any bulk theory containing a gravitational topological sector
and an independent matter topological sector, with non-Abelian mutual
monodromy replacing Eq.~\eqref{eq:monodromy}. Third, the product-holonomy
viewpoint provides a natural language for quantum corrections, and fixes them
in both sectors at once. Corrections constructed from the
charge-conjugation-even sector cannot split the pair, including the FLM
one-loop contribution and quantum extremization, while the charge-odd sector
splits it at first order with the universal value of
Eq.~\eqref{eq:oddcumulantvalue}. Fourth, a separate linearized charge-matching analysis turns the even BPS data
into the conditional continuous envelope of Eq.~\eqref{eq:bandcondition};
this estimate requires the stationary-branch, rotating-matching and
localization assumptions stated in Secs.~\ref{sec:exterior} and~\ref{sec:band}.

The charge-conjugation theorem itself is not restricted to stationary
solutions. The BTZ purification application remains conditional on the
existence of the stationary asymptotically BTZ CSH branch specified in
Appendix~\ref{sec:scope}. Existing backreacted solutions establish the
compatibility of vortex sectors with gravity, but not the desired regular
positive-mass nonextremal CSH black hole. Within one fixed marked replica
filling, the gravitational equality is nonperturbative in Newton's constant
and the matter-sector phase is universal in the topological limit.



\bibliographystyle{JHEP}
\bibliography{cs_vortex_holonomy_updated}

\appendix

\section{Gravitational Chern-Simons conventions and BTZ holonomies}
\label{sec:gravcsapp}

With the conventions of Eq.~\eqref{eq:gravconnections}, the field strengths
of the gravitational connections are
\begin{equation}
F^{(\pm)a}
=
R^a+\frac{1}{2L^2}\epsilon^a{}_{bc}e^b\wedge e^c
\pm\frac{1}{L}T^a,
\label{eq:gravfieldstrength}
\end{equation}
where
\begin{equation}
T^a=\dd e^a+\epsilon^a{}_{bc}\omega^b\wedge e^c,
\qquad
R^a=\dd\omega^a+\frac{1}{2}\epsilon^a{}_{bc}\omega^b\wedge\omega^c.
\label{eq:torsioncurvature}
\end{equation}
For torsion-free vacuum Einstein gravity with $\Lambda=-1/L^2$, both field
strengths vanish. Matter sources make the connections nonflat in the core,
but Eq.~\eqref{eq:Tparity} leaves the triad and spin connection unchanged
under charge conjugation. Equality of the closed holonomy characters and of
identically dressed gravitational Wilson amplitudes therefore requires
neither flatness nor path independence, as emphasized in
Remark~\ref{rem:noflatness}.

For the BTZ asymptotic connection, representatives of the angular-cycle
holonomies can be chosen as
\begin{equation}
\rho_L=
\begin{pmatrix}
e^{\pi(r_+-r_-)/L}&0\\
0&e^{-\pi(r_+-r_-)/L}
\end{pmatrix},
\qquad
\rho_R=
\begin{pmatrix}
e^{\pi(r_++r_-)/L}&0\\
0&e^{-\pi(r_++r_-)/L}
\end{pmatrix}.
\label{eq:BTZholmatrices}
\end{equation}
Their conjugacy classes determine the two combinations $r_+-r_-$ and
$r_++r_-$, and hence the BTZ charges in Eq.~\eqref{eq:BTZcharges}. The
equality in Eq.~\eqref{eq:gravholonomyblind} therefore gives a
gauge-invariant statement of the common asymptotic geometry of the
charge-conjugate pair.

\section{Analytic holographic purification formulas in BTZ}
\label{sec:btzpurapp}

For completeness we collect the special BTZ formulas used in
Sec.~\ref{sec:purificationgate}, derive
Eqs.~\eqref{eq:universaljump} and~\eqref{eq:universalmarkov}, and record the rotating-BTZ matching diagnostic used conditionally in
Eq.~\eqref{eq:rotatinggate}.

\subsection{One-sided equal intervals}

In a one-sided nonrotating BTZ geometry, two equal intervals each slightly
smaller than half of the boundary circle have a horizon-threading
entanglement wedge. If their half-width is $\alpha$, the cross section can be
written as
\begin{equation}
\EPH(r_+,\alpha)=\frac{L}{2G}
\log\!\left[\coth\!\left(
\frac{r_+}{2L}\left(\frac{\pi}{2}-\alpha\right)\right)\right].
\label{eq:equalintervalBTZEP}
\end{equation}
As $\alpha\to\pi/2$ the argument of the hyperbolic cotangent tends to zero
and Eq.~\eqref{eq:equalintervalBTZEP} diverges logarithmically. Regulated at
$r=r_c$ it becomes
\begin{equation}
\EPH\!\left(r_+,\frac{\pi}{2}\right)
=\frac{L}{2G}\log\!\left(\frac{2r_c}{r_+}\right).
\label{eq:halfboundaryBTZEP}
\end{equation}
These are the radial-geodesic branches that generate the middle line of
Eq.~\eqref{eq:btzpurplateau}, and Eq.~\eqref{eq:halfboundaryBTZEP} coincides
with the plateau height there, as it must.

\subsection{Two-sided geometry and the universal jump}

For the two-sided geometry the connected HRT saddle gives
Eqs.~\eqref{eq:twosidedEP} and \eqref{eq:twosidedMI}. At $T_0=0$,
Eq.~\eqref{eq:twosidedEP} reduces to
\begin{equation}
\EPH(0)=\frac{L}{4G}\arccosh\!\left[\cosh\!\left(\frac{\pi L}{z_H}\right)\right]
=\frac{\pi r_+}{4G},
\label{eq:twosidedEPzero}
\end{equation}
that is, to one half of the bifurcation-circle circumference $2\pi r_+$
divided by $4G$.

At the transition time the argument of Eq.~\eqref{eq:twosidedEP} collapses to
a pure number. Writing $S\equiv\sinh(\pi L/2z_H)$, Eq.~\eqref{eq:Tstar} gives
$\cosh(T_*/z_H)=S$, so
\begin{equation}
\operatorname{sech}^{2}\!\left(\frac{T_*}{z_H}\right)=\frac{1}{S^{2}},
\qquad
\cosh\!\left(\frac{\pi L}{z_H}\right)-1=2S^{2},
\label{eq:jumpingredients}
\end{equation}
using the double-angle identity $\cosh 2y-1=2\sinh^2y$. The brace in
Eq.~\eqref{eq:twosidedEP} is therefore $1+2S^2/S^2=3$, independently of $L$,
$z_H$ and $G$. Since $\sinh y=1$ implies $\cosh 2y=1+2\sinh^2y=3$, one has
$\arccosh 3=2\arcsinh 1=2\log(1+\sqrt2)$, and Eq.~\eqref{eq:universaljump}
follows.

The associated Markov gap follows immediately. The connected two-sided mutual
information of Eq.~\eqref{eq:twosidedMI} vanishes at $T_*$ by construction,
while the reflected entropy of the connected saddle is $S_R=2\EPH$ at leading
order~\cite{DuttaFaulkner2021}. Hence
\begin{equation}
\Delta_M(T_*^-)=\left(2\EPH-I\right)\big|_{T_*^-}
=\frac{2c}{3}\log\left(1+\sqrt2\right)
=\frac{c}{3}\log\left(3+2\sqrt2\right),
\label{eq:markovderivation}
\end{equation}
using $(1+\sqrt2)^2=3+2\sqrt2$. Like Eq.~\eqref{eq:universaljump}, this value
is independent of the cutoff, the horizon radius and Newton's constant, and it
is common to the charge-conjugate pair.

\subsection{Rotating BTZ matching diagnostic}
\label{sec:rotatingplateau}

The nonrotating plateau criterion used in the main text follows from a static
reflection-symmetric geometry. Rotating BTZ is stationary rather than static,
so a covariant EWCS must in general be obtained from extremal rather than
constant-time minimal surfaces. We therefore separate two facts. First, the
standard equal-time interval entropy factorizes exactly into left- and
right-moving sectors. Second, a simple radial proper-length candidate can be
written in closed form. Their matching motivates the diagnostic used in
Sec.~\ref{sec:band}, but we do not claim here that these candidates exhaust the
globally minimal covariant EWCS for every rotating configuration.

For rotating BTZ,
$F=(r^2-r_+^2)(r^2-r_-^2)/(L^2r^2)$, and the fixed-$t$, fixed-$\varphi$ radial
proper length is
\begin{equation}
\ell_{\rm rad}
=\int_{r_+}^{r_c}\frac{\dd r}{\sqrt{F}}
=L\log\!\left[
\frac{\sqrt{r_c^2-r_+^2}+\sqrt{r_c^2-r_-^2}}
{\sqrt{r_+^2-r_-^2}}\right]
\longrightarrow
L\log\frac{2r_c}{\sqrt{r_+^2-r_-^2}} .
\label{eq:rotradial}
\end{equation}
The equal-time single-interval entropy at half-width $\alpha$ is
\cite{HRT2007,HubenyMaxfieldRangamaniTonni2013}
\begin{equation}
S(\alpha)=\frac{L}{4G}\log\!\left[
\frac{4r_c^2}{r_+^2-r_-^2}
\sinh\!\left(\frac{(r_+-r_-)\alpha}{L}\right)
\sinh\!\left(\frac{(r_++r_-)\alpha}{L}\right)\right].
\label{eq:rotinterval}
\end{equation}
Matching these regulated expressions cancels the cutoff and gives
\begin{equation}
\sinh\!\left(\frac{(r_+-r_-)\alpha_c}{L}\right)
\sinh\!\left(\frac{(r_++r_-)\alpha_c}{L}\right)=1 .
\label{eq:rotcrossover}
\end{equation}
If this matching continues to identify the relevant covariant cross-section
branch, the condition $\alpha_c\leq\pi/2$ becomes Eq.~\eqref{eq:rotatinggate}
through Eq.~\eqref{eq:dimensionlessBTZcharges}. Setting $r_-=0$ returns the
nonrotating criterion Eq.~\eqref{eq:plateaucondition} exactly.

Two scope statements are important. Reality of the left-moving factor imposes
$|\widehat j|\leq\widehat M$, but this nonextremality condition is logically
separate from the matching inequality itself. Moreover, the universal jump
and Markov-gap values in Eqs.~\eqref{eq:universaljump}
and~\eqref{eq:universalmarkov} were derived for $r_-=0$; no claim is made that
the same numerical heights survive for $r_-\neq0$.

Returning now to the nonrotating two-sided benchmark of
Sec.~\ref{sec:twosidedpurification}, the physical holographic purification
cost is obtained by comparing the connected saddle with the disconnected
one:
\begin{equation}
E_{{\rm P},{\rm phys}}^{\rm(h)}(T_0)=
\begin{cases}
\EPH(T_0),&0\leq T_0<T_*,\\
0,&T_0\geq T_*.
\end{cases}
\label{eq:physicalEPpiecewise}
\end{equation}
The discontinuity at $T_*$ is geometric and common to the charge-conjugate
pair, and its height is Eq.~\eqref{eq:universaljump}.
Equation~\eqref{eq:timeindependentcontrast} is instead conditional on a
linked filling of the connected purification saddle, and does not extend the
cross-section-supported protocol into the disconnected phase.

\section{Replica change of variables and one-loop determinant pairing}
\label{sec:replicaapp}

This appendix records the perturbative content of
Theorem~\ref{thm:replicablind}, Corollary~\ref{cor:flmblind} and
Corollary~\ref{cor:chargedblind}. The one-loop pairing can be checked directly
at the level of the gauge-fixed quadratic operators, which makes the
fixed-filling identities explicit order by order.

Collect all real fluctuation fields, including metric, scalar, gauge and ghost
components, into $\eta$. Around a charge-conjugate saddle pair
$\Psi_{-n}={\sf C}\Psi_n$, write
\begin{equation}
I_{E,\rm gf}[\Psi_n+\eta]
=I_{E,\rm gf}[\Psi_n]
+\frac12\langle\eta,\mathcal K_n\eta\rangle+O(\eta^3).
\label{eq:quadraticexpansion}
\end{equation}
The action of ${\sf C}$ on the fluctuation multiplet is a linear involution
$U_{\sf C}$ with $U_{\sf C}^2=\mathbf 1$, hence invertible with unit-modulus
Jacobian. Charge-conjugation invariance of the gauge-fixed action gives
\begin{equation}
\mathcal K_{-n}=U_{\sf C}\mathcal K_nU_{\sf C}^{-1}.
\label{eq:hessiansimilarity}
\end{equation}
Since Eq.~\eqref{eq:hessiansimilarity} is a similarity transformation, the
bosonic and ghost spectra are paired with their multiplicities. For any common
charge-conjugation-invariant regulator,
\begin{equation}
{\det}'\mathcal K^{\rm bos}_{-n}={\det}'\mathcal K^{\rm bos}_{n},
\qquad
{\det}'\mathcal K^{\rm gh}_{-n}={\det}'\mathcal K^{\rm gh}_{n},
\label{eq:detpairing}
\end{equation}
where the prime denotes the same treatment of zero modes in the paired
sectors. The one-loop effective action
\begin{equation}
\Gamma^{(1)}_{q,\fmark}
=\frac12\log{\det}'\mathcal K^{\rm bos}_{q,\fmark}
-\log{\det}'\mathcal K^{\rm gh}_{q,\fmark}
\label{eq:oneloopdef}
\end{equation}
is therefore the same in the two sectors. Local counterterms are built from
the common metric and charge-conjugation-even field combinations, so the
renormalized equality survives, which proves Eq.~\eqref{eq:oneloopblind}
explicitly.

For the bulk entropy term, let $\Alg(\Sigma_X)$ be the same gauge-invariant
operator algebra, including the same center or edge prescription, in the two
sectors. Charge conjugation induces an algebra automorphism $\alpha_{\sf C}$
and maps the reduced state according to
\begin{equation}
\rho_{\Sigma_X,-n}
=U_{\sf C,\Sigma_X}\rho_{\Sigma_X,n}U_{\sf C,\Sigma_X}^{\dagger}
\label{eq:reducedstatepairing}
\end{equation}
whenever a Hilbert-space representation is used; equivalently, the two
restricted algebraic states differ only by $\alpha_{\sf C}$. Their spectra and
therefore their algebraic entropies coincide,
\begin{equation}
S_{\rm bulk}^{\rm alg}(\Sigma_X;n)
=S_{\rm bulk}^{\rm alg}(\Sigma_X;-n).
\label{eq:bulkentropypairing}
\end{equation}
The area term is equal because the metric is common, and the FLM local terms
are equal by the same regulator and counterterm argument. This establishes
Eq.~\eqref{eq:genblind} for every $X$, not only at an extremum. Taking a first
variation then shows that the stationary sets coincide; comparing their
values shows that the minima and all degeneracies among competing quantum
extremal surfaces also coincide.

The complementary sector is fixed by the same automorphism. A charged
insertion is not an element of the even algebra, since $\alpha_{\sf C}$ maps a
probe $W_p$ to $W_{-p}$ and the charge $\cQ_A$ to $-\cQ_A$. Applying the
change of variables to the generating functional of
Eq.~\eqref{eq:chargedreplica} therefore gives
Eq.~\eqref{eq:chargedpairing} rather than
Eq.~\eqref{eq:replicablind}, so the charge-odd cumulants and the
symmetry-resolved spectrum reverse with the winding while the neutral
observables do not. The matter-sector contrast of Eq.~\eqref{eq:central} is
the topological-limit representative of that odd data.

\section{Scope of the BTZ statement}
\label{sec:scope}

This appendix states precisely which parts of the article are unconditional
and which presuppose a gravitating vortex branch that is not constructed
here.


The classical statement proved in Sec.~\ref{sec:cthm} is conditional only on
the existence of one admissible solution in the sense of
Definition~\ref{def:admissible}. The fixed-filling quantum extension of
Sec.~\ref{sec:replicablind} additionally requires the measure, gauge fixing,
regulator and algebraic edge prescription of
Definition~\ref{def:quantumadmissible}; it does not assume the missing BTZ
vortex branch beyond whatever saddle is being paired. If $(g_{\mu\nu},\phi,A_\mu)$ is such a
solution, then $(g_{\mu\nu},\phi^*,-A_\mu)$ is a second solution with the
same metric and opposite winding. This map requires neither stationarity nor
a uniqueness theorem, and it applies nonperturbatively in $G$ because it uses
only charge conjugation and invariant boundary data, not the Bogomolny
equations. Stationarity is needed only to interpret the common asymptotic
data as BTZ parameters and to use the explicit thermal purification formulas
of Sec.~\ref{sec:purificationgate}.


The distinction between the flat-space BPS calculation and a fully
backreacted gravitating Bogomolny system is important. London showed that a
specific eighth-order potential is required to reduce both the Einstein and
CSH equations to first order in his cylindrically symmetric
construction~\cite{London1995}. Cl\'ement's exact solutions likewise use a
gravitationally modified eighth-order potential and interpolate between two
constant-curvature regions~\cite{Clement1996}, and exact self-dual
gravitating CSH solitons are known in related
settings~\cite{ChungEtAl2001,KimKim2000}. Their asymptotic metrics are
locally of extreme-BTZ form, but the generic solutions have cylindrical
spatial topology, unquantized total flux and a negative BTZ mass parameter.
They therefore do not realize the positive-mass nonextremal thermal setup of
this article. Their narrower role is to show explicitly that a gravitating
CSH system can preserve the charge-conjugation pairing while supporting
nontrivial vortex structure.


Other theories establish complementary facts. The exact AdS$_3$ scalar vortex black holes of Cadoni, Pani and Serra possess inner and outer horizons and depend on the winding through $n^2$, but their scalar profile is singular at the origin and there is no local CSH gauge sector~\cite{CadoniPaniSerra2010}. Edery's Einstein-Maxwell-Higgs solutions
are regular and have positive mass, but remain
horizonless~\cite{Edery2020}. The horizon-piercing solutions of Ach\'ucarro,
Gregory and Kuijken demonstrate that a local gauge vortex can constitute
measurable black-hole hair, but in a four-dimensional asymptotically flat
Maxwell-Higgs setting~\cite{AchucarroGregoryKuijken1995}. A related
three-dimensional construction in the Nielsen-Olesen case was studied in
Ref.~\cite{Ghosh2021}. These results support the individual ingredients of
the proposed physical picture without providing the missing
Einstein-AdS$_3$-CSH branch.

An especially close neighbouring example is provided by Kim, Kim and Kimm,
who numerically constructed smooth, fully backreacted global $U(1)$ vortices
in AdS$_3$. As the magnitude of the negative cosmological constant is
lowered, their solutions pass from regular horizonless geometries through an
extremal configuration to a two-horizon charged black hole. Their asymptotic
metric, however, contains the logarithmic Goldstone contribution
\begin{equation}
 B(r)=|\Lambda|r^2-8\pi Gv^2n^2\ln(r/r_c)
      -8G\mathcal M+1+O(r^{-2}),
\label{eq:KKKasymptotics}
\end{equation}
and therefore belongs to a logarithmically deformed charged-BTZ class rather
than the Brown-Henneaux BTZ class assumed here. A duality transformation maps
the global-vortex winding to an electric Gauss-law charge, so the metric is
even in $n$ while the dual matter charge is odd in $n$. This gives an
independent gravitating-vortex realization of orientation-blind geometry and
sign-sensitive matter data, but it does not provide the localized
Einstein-CSH branch required for the present thermal purification
construction~\cite{KimKimKimm1997}.

No cited construction simultaneously provides a regular core, positive mass,
a nonextremal BTZ horizon and the CSH vortex sector used for the matter
monodromy. For the fixed sixth-order potential in Eq.~\eqref{eq:potential},
constructing such a branch remains a separate problem. The replica
identities of Sec.~\ref{sec:replicablind} and
Sec.~\ref{sec:chargedsector} are symmetry statements and are independent of
it.

Section~\ref{sec:exterior} makes one part of the residual gap quantitative.
At fixed winding, the flat-space self-dual source has the exponentially small
energy tail $\Delta_n(r_+)$ of Eq.~\eqref{eq:tailfraction}, so its integrated
charge matching becomes increasingly insensitive to the profile outside a
large putative horizon. This is not, however, a proof of horizon regularity.
For a stationary rotating horizon generated by
$\chi=\partial_t+\Omega_H\partial_\varphi$, the regular gauge-field
condition is formulated in terms of $\chi^\mu A_\mu=A_t+\Omega_HA_\varphi$
(and the scalar phase), not $A_t$ alone. The flat-space BPS relation in
Eq.~\eqref{eq:electricequalspotential} therefore does not impose the full
stationary horizon condition. The numerical values of $1-g(r_+)^2$ quoted by
the support code are useful electric-tail diagnostics in the nonrotating
limit, but they are not a gauge-invariant bound on the complete horizon
defect. The perturbative exterior analysis consequently assumes both small
$\Delta_n(r_+)$ and the existence of a regular stationary branch; it does not
construct that branch.


The flat-space BPS profiles are exponentially localized, so a weakly dressed
asymptotically AdS$_3$ solution would approach a BTZ metric at large radius
with corrections controlled by the matter tail. Since the tail is nonzero at
every finite radius, the metric need not become exactly BTZ outside a sharp
core. Establishing horizon regularity, global causal structure, positive mass and the nonextremality inequality $|J|<ML$ requires solving the full coupled boundary-value problem. An especially stringent numerical test would combine independent asymptotic-charge extraction with a matter-integral or constraint-based check, following the two-method strategy used for regular gravitating vortices~\cite{Edery2020, Ghosh2021}.


Accordingly, this article proves six statements. First, every classical
Einstein-gravity observable built from the common metric, the first-order
connections and matched boundary dressing data is exactly even under
$n\to-n$ (Theorems~\ref{thm:pairing} and~\ref{thm:holsep}). Second, all
classical Brown-Henneaux modes and, on a stationary BTZ branch, all BTZ
charges, horizons and leading geometric purification observables are
identical (Corollaries~\ref{cor:banados}, \ref{cor:btz}
and~\ref{cor:geom}). Third, for a quantum-admissible fixed filling, every
charge-conjugation-even replica partition function is identical, which pairs
the FLM one-loop determinants, generalized entropies and quantum extremal
surfaces (Theorem~\ref{thm:replicablind} and
Corollary~\ref{cor:flmblind}). Fourth, on the same filling the charged
generating functional obeys $Z_{q,\fmark}[n;\mu]=Z_{q,\fmark}[-n;-\mu]$, so
the symmetry-resolved partition functions and entropies satisfy
$\cZ_q(\cQ;n)=\cZ_q(-\cQ;-n)$ and every odd charged cumulant is exactly odd in
$n$ (Corollary~\ref{cor:chargedblind}). Fifth, in a fixed marked linked replica
filling, the normalized matter amplitude has the universal topological-limit
contrast in Eq.~\eqref{eq:central}, subject to finite-core corrections,
matched edge and framing data, and the nonaliasing condition stated in
Sec.~\ref{sec:contrast}. Sixth, that amplitude aliases the two orientations
if and only if the vortex sector is self-conjugate
(Proposition~\ref{prop:aliasing}), and away from that case the doublet is a
one-qubit code for the neutral gravitational algebra
(Proposition~\ref{prop:logical}). The interpretive statements of
Sec.~\ref{sec:noglobal} and Sec.~\ref{sec:erasure} are consequences of these
six together with results established elsewhere, and are not independent
claims of this article.

Separately, Secs.~\ref{sec:exterior} and~\ref{sec:band} derive a conditional
linearized consequence of the BPS core data: under the stationary-branch and
charge-matching assumptions, the even asymptotic shifts are those of
Eq.~\eqref{eq:exteriorcharges}; with the additional rotating-BTZ matching
diagnostic they generate the continuous envelope of
Eq.~\eqref{eq:bandcondition}. This perturbative estimate is not included among
the six exact statements above. The numerical calculation of
Sec.~\ref{sec:profiles} verifies local source parity and the BPS sum rules,
but it is not a fully nonlinear BTZ black-hole solution.

\end{document}